\documentclass[12pt,reqno,a4paper]{amsart}
\usepackage{amsmath,amssymb,dsfont,graphicx,cite,latexsym,
epsf,cancel,epic,eepic}
\usepackage[colorlinks=false]{hyperref}
\usepackage{mathpazo}
\usepackage{tikz}
\usepackage{subfig}

\usetikzlibrary{arrows.meta}
\usetikzlibrary{arrows}
\setlength{\textwidth}{165.0mm}
\setlength{\textheight}{230.0mm}
\setlength{\oddsidemargin}{0mm}
\setlength{\evensidemargin}{0mm}
\setlength{\topmargin}{5mm}
\setlength{\parindent}{5.0mm}
\newtheorem{theorem}{Theorem}

\newtheorem{lemma}{Lemma}

\newtheorem{definition}{Definition}

\def\beq{\begin{equation}}
\def\eeq{\end{equation}}
\def\bea{\begin{eqnarray}}
\def\eea{\end{eqnarray}}
\newcommand{\tetr}[4]{
\raisebox{-7mm}{
\begin{tikzpicture}[scale=.3]
	\draw (-1,0) node[left] {$#1$} -- 
	                 (0,-0.8) node[below] {$#3$} --
			(1,0) node[right] {$#2$} --
			(0,1) node[above] {$#4$} -- cycle;
      \draw[dashed] (-1,0) -- (1,0)	;
      \draw (0,1) -- (0,-0.8);
\end{tikzpicture}
}
}
\newcommand{\tri}[3]{
\raisebox{-7mm}{
\begin{tikzpicture}[scale=.25]
	\draw (-1,0) node[below left] {$#1$} -- 
			(1,0) node[below right] {$#2$} --
			(0,1.732) node[above] {$#3$} -- cycle;
\end{tikzpicture}
}
}
\makeatletter
\expandafter\let\expandafter
\reset@font\csname reset@font\endcsname
\def\subeqnarray{\arraycolsep1pt
   \def\@eqnnum\stepcounter##1{\stepcounter{subequation}
       {\reset@font\rm(\theequation\alph{subequation})}}
\jot5mm     \eqnarray}

\makeatother
\newcommand{\bbZ}{{\mathbb Z}}
\newcommand{\bbR}{{\mathbb R}}
\newcommand{\bbC}{{\mathbb C}}

\newcommand{\cL}{{\mathcal L}}


\def\epsilon{\varepsilon}

\def\bbC{\mathbb C}

\def\bbZ{\mathbb Z}

\newbox\meibox
\def\placeunder#1#2#3#4{\setbox\meibox%
\vbox{\hbox{\hskip#4$\hphantom{#2}$}\hbox{$\hphantom{#1}$}}%
\vtop{\baselineskip=0pt\lineskiplimit=\baselineskip%
\lineskip=#3\hbox to \wd\meibox{\hfil\hskip#4$#2$\hfil}%
\hbox to \wd\meibox{\hfil$#1$\hfil}}}
%

%

%
\def\intprod{\mathbin{\hbox to 6pt{%
                 \vrule height0.4pt width5pt depth0pt
                 \kern-.4pt
                 \vrule height6pt width0.4pt depth0pt\hss}}}

\begin{document}
\title[Linear integrable systems on quad-graphs]
{Linear integrable systems on quad-graphs}

%
\author{Alexander I. Bobenko \and Yuri B. Suris}

\thanks{E-mail: {\tt  bobenko@math.tu-berlin.de, suris@math.tu-berlin.de}}

\maketitle

\begin{center}
{\footnotesize{
Institut f\"ur Mathematik,
Technische Universit\"at Berlin, Str. des 17. Juni 136,
10623 Berlin, Germany
}}
\end{center}

\begin{abstract}

In the first part of the paper, we classify linear integrable (multi-dimensionally consistent) quad-equations on bipartite isoradial quad-graphs in $\bbC$, enjoying natural symmetries and the property that the restriction of their solutions to the black vertices satisfies a Laplace type equation. The classification reduces to solving a functional equation. Under certain restriction, we give a complete solution of the functional equation, which is expressed in terms of elliptic functions. We find two real analytic reductions, corresponding to the cases when the underlying complex torus is of a rectangular type or of a rhombic type. The solution corresponding to the rectangular type was previously found by Boutillier, de Tili\`ere and Raschel. Using the multi-dimensional consistency, we construct the discrete exponential function, which serves as a basis of solutions of the quad-equation. 

In the second part of the paper, we focus on the integrability of discrete linear variational problems. We consider discrete pluri-harmonic functions, corresponding to a discrete 2-form with a quadratic dependence on the fields at black vertices only. In an important particular case, we show that the problem reduces to a two-field generalization of the classical star-triangle map. We prove the integrability of this novel 3D system by showing its multi-dimensional consistency. The Laplacians from the first part come as a special solution of the two-field star-triangle map. 
\end{abstract}

\section{Introduction}
\label{sect: Introduction}

Integrable systems are usually considered as belonging to the nonlinear mathematical physics.
They appear as compatibility conditions of overdetermined systems of (auxiliary) linear equations.
It turns out that linear systems with similar structures are equally if not even more important than the nonlinear ones (compare, for example, characterizations of Jacobi varieties in terms of solutions of linear integrable systems \cite{Krichever} and in terms of nonlinear integrable equations \cite{Shiota}).

This paper is devoted to a classification of discrete linear integrable equations. We consider linear quad equations. These are equations relating fields at four vertices of faces of quad-graphs. The latter are cell-decompositions of surfaces with quadrilateral faces. A standard example is the square grid ${\mathbb Z}^2$. Integrability of quad equations is understood as the multi-dimensional consistency \cite{BS_quad, BS_DDG_book}. The theory we are developing here can be seen as a generalization of discrete complex analysis on isoradial graphs \cite{Kenyon, Chelkak_Smirnov}. We are guided by the structures of discrete complex analysis with multi-dimensionally consistent discrete Cauchy-Riemann equations \cite{BMS}. We are motivated by new examples of massive discrete Dirac and Laplace operators found in \cite{BTR, dT}. Our goal is to place these examples in a comprehensive picture.

Let us describe the setup. The underlying combinatorial structure is a bipartite quad-graph $G$, rhombically embedded in $\mathbb C$, with black and white vertices $V(G)=B\cup W$. Solutions of the quad equations restricted to black (white) vertices are required to satisfy Laplace type equations, which are equations on black (resp. white) stars, with the edges being the diagonals of the original quads. This resembles the relation between discrete holomorphic and discrete harmonic functions. The main result of this paper is a classification of linear quad-equations of this type (see Theorems \ref{th 1}, \ref{th 2}).

It is worth to mention that in the process of classification remarkable new structures appeared. A central role in our analysis is played by the functional equation
\begin{align*}
f(\alpha,\beta)f(\beta,\gamma)f(\gamma,\alpha)+f(\beta,\alpha)f(\alpha,\delta)f(\delta,\beta)&+f(\gamma,\beta)f(\beta,\delta)f(\delta,\gamma)\nonumber \\
&+f(\alpha,\gamma)f(\gamma,\delta)f(\delta,\alpha)=0.
\end{align*}
Combinatorially, this equation can be treated as the closeness condition 
$$
d\omega\bigg( \tetr{\alpha}{\beta}{\gamma}{\delta} \bigg)=0,
$$
on a tetrahedron with the vertices carrying the fields $\alpha$, $\beta$, $\gamma$, $\delta$, for the discrete 2-form $\omega$ defined as
$$
\omega\bigg( \tri{\alpha}{\beta}{\gamma} \bigg)=f(\alpha,\beta)f(\beta,\gamma)f(\gamma,\alpha).
$$
We give a complete solution of this equation for the case $f(\alpha,\beta)=f(\alpha-\beta)$, when the function $f$ depends on the difference of its arguments. It is given in terms of elliptic theta-functions. When the underlying complex torus is of rectangular type, we recover the Z-Dirac equation found in \cite{dT}. Another real valued solution comes from complex tori of rhombic type. Restricted to black vertices, solutions of quad-equations result in discrete harmonic functions for the massive Laplace operator found in \cite{BTR}.

Using the multi-dimensional consistency, we construct the discrete exponential function on $V(G)$. It can be treated as (generalized) discrete holomorphic and naturally extends the discrete exponential function on black vertices $B$ introduced in \cite{BTR}, which is (generalized) discrete harmonic.
 
In the second part of the paper we focus on the integrability of discrete linear variational problems. A counterpart of the multi-dimensional consistency in the variational context is the pluri-Lagrangian structure. In the linear case the corresponding systems are called discrete pluri-harmonic \cite{BS_pluriharmonic}. The Lagrangian is a discrete 2-form on quadrilateral faces. In this paper we consider the case when this 2-form depends on the fields at black vertices only, see Fig. ~\ref{fig: two-cells}. This is equivalent to consider a general quadratic form on black diagonals of the quads:
$
{\mathcal L}=\frac{1}{2}(bx^2+cx_{ij}^2)-ax x_{ij}.
$ 
In the case $b=c$ we classify all pluri-harmonic systems. They are described by a two field star-triangle map for the coefficients $a$ and $c$. We prove the integrability of this 3D system by showing its multi-dimensional consistency.
As a special solution of the star-triangle map, one recovers the massive Laplacians from \cite{BTR} considered in the first part of the paper. 

In the Appendix we consider the pluri-harmonic systems satisfying $bc=a^2$, which also includes massive Laplacians. It is shown that they are gauge equivalent to the case $a=b=c$ governed by the standard star-triangle equation.

\section{Classification of integrable linear quad-equations}
\label{sect: classification}

We aim at a classification of the most general linear integrable quad-equations on isoradial bipartite quad-graphs, such that their solutions, being restricted to the black (or white) vertices, satisfy Laplace type equations.  Let $G$ be a quad-graph, rhombically embedded in $\bbC$, with the vertices $V=B\cup W$, $B$ the set of black vertices, $W$ the set of white vertices. All edges are directed from black to white and carry labels which are the corresponding complex numbers $\exp(i\alpha)$ or just their logarithms $\alpha$. Note that opposite edges of every quad carry the labels $\alpha$ and $\alpha+\pi$, see Figure \ref{fig: quad-equation}.

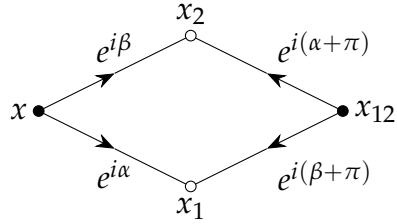
\begin{figure}[htbp]
\centering
\begin{tikzpicture}[scale=1]
\draw[-{Stealth[scale=1.5]}] (0,0) -- (1,.5) node[above] {$e^{i\beta}$}; 
\draw (1,.5) -- (2,1);
\draw[-{Stealth[scale=1.5]}] (0,0) -- (1,-.5) node[below] {$e^{i\alpha}$}; 
\draw (1,-.5) -- (2,-1);
\draw[-{Stealth[scale=1.5]}] (4,0) -- (3,.5) node[above right] {$e^{i(\alpha+\pi)}$}; 
\draw (3,.5) -- (2,1);
\draw[-{Stealth[scale=1.5]}] (4,0) -- (3,-.5) node[below right] {$e^{i(\beta+\pi)}$}; 
\draw (3,-.5) -- (2,-1);

\draw[fill=black] (0,0) circle(2pt) node[left] {$x$};
\draw[fill=white]  (2,1) circle(2pt) node[above] {$x_2$};
\draw[fill=white]  (2,-1) circle(2pt) node[below] {$x_1$};
\draw[fill=black] (4,0) circle(2pt) node[right] {$x_{12}$};
\end{tikzpicture}
\caption[quad-equation]{A quad-equation on a bipartite quad-graph with directed edges}
\label{fig: quad-equation}
\end{figure}

We consider quad-equations of the following (three-leg) form:
\begin{equation}\label{quad eq}
f(\alpha,\beta) x_{12}-g(\alpha,\beta)x_0=i\Big(h(\beta)x_2-h(\alpha)x_1\Big).
\end{equation}
Ansatz \eqref{quad eq} ensures that, upon summation over quads $(x_0,x_k,x_{k,k+1},x_{k+1})$ adjacent to a given (black) vertex $x_0$ (with the labels $\alpha_k$, $\alpha_{k+1}$, say), one ends up with a Laplace-type equation
\begin{equation}\label{Laplace eq}
\sum_{k=1}^m f(\alpha_k,\alpha_{k+1})x_{k,k+1}-\Big(\sum_{k=1}^m g(\alpha_k,\alpha_{k+1})\Big)x_0=0,
\end{equation}
see Figure \ref{Fig: graph and black star}. Actually, the same will be true for vertex stars of white vertices, as well, due to the symmetry conditions which will be imposed on \eqref{quad eq}.

\begin{figure}[htbp]
   \centering
   \subfloat[Rhombic faces of $G$ adjacent to $x_0$]{\label{Fig: star in a graph}
    \begin{tikzpicture}[scale=1.5,inner sep=2]  
      \node (x1) at (1,0) [circle,fill,label=-90:$x_{56}$] {};
      \node (x11) at (3,0) [circle,fill,label=-90:$x_{61}$] {};
      \node (x2) at (0,1.732) [circle,fill,label=135:$x_{45}$] {};
      \node (x12) at (2,1.732) [circle,fill,label=0:$\;\;x_0$] {};
      \node (x112) at (4,1.732) [circle,fill,label=45:$x_{12}$] {};
      \node (x22) at (1,3.464) [circle,fill,label=90:$x_{34}$] {};
      \node (x122) at (3,3.464) [circle,fill,label=90:$x_{23}$] {}; 
      \node (y5) at (1,1.155) [circle,draw,label=-15:$x_5$]{};
      \node (y4) at (1,2.309) [circle,draw,label=-90:$x_4$]{};
       \node (y3) at (2,2.887) [circle,draw,label=90:$x_3$]{};
       \node (y2) at (3,2.310) [circle,draw,label=30:$x_2$]{};
        \node (y1) at (3, 1.155) [circle,draw,label=-15:$x_1$]{};
        \node (y6) at (2,0.577) [circle,draw,label=135:$x_6$]{};
      \draw [very thick] (x12) to (y5) to (x2) to (y4) to (x12);
      \draw [very thick] (y4) to (x22) to (y3) to (x12);
      \draw [very thick] (y3) to (x122) to (y2) to (x12);
      \draw [very thick] (y2) to (x112) to (y1) to (x12);
      \draw [very thick] (y1) to (x11) to (y6) to (x12);
      \draw [very thick] (y6) to (x1) to (y5);
   \end{tikzpicture}}
   \qquad
   \subfloat[Star of $x_0$ in the black graph, supporting the Laplace type equation]{\label{Fig: dual face}
   \begin{tikzpicture}[scale=1.5,inner sep=2]  
      \node (x1) at (1,0) [circle,fill,label=-90:$x_{56}$] {};
      \node (x11) at (3,0) [circle,fill,label=-90:$x_{61}$] {};
      \node (x2) at (0,1.732) [circle,fill,label=135:$x_{45}$] {};
      \node (x12) at (2,1.732) [circle,fill,label=15:$\;x_0$] {};
      \node (x112) at (4,1.732) [circle,fill,label=45:$x_{12}$] {};
      \node (x22) at (1,3.464) [circle,fill,label=90:$x_{34}$] {};
      \node (x122) at (3,3.464) [circle,fill,label=90:$x_{23}$] {}; 
      \draw [very thick] (x12) to (x1);
       \draw [very thick] (x12) to (x11);
      \draw [very thick]  (x12) to (x2);
      \draw [very thick] (x12) to (x112);
      \draw [very thick] (x12) to (x122);
      \draw [very thick] (x12) to (x22);
   \end{tikzpicture}}
  \caption{}
   \label{Fig: graph and black star}
\end{figure}

These conditions are:
\begin{itemize}
\item Functions $f(\alpha,\beta)$, $g(\alpha,\beta)$, $h(\alpha)$ are $2\pi$-periodic with respect to each of their arguments.
\item Invariance with respect to the change of orientation of the quad, $x_1\leftrightarrow x_2$, $\alpha\leftrightarrow\beta$:
\begin{equation}\label{change orient}
f(\alpha,\beta)=-f(\beta,\alpha), \quad g(\alpha,\beta)=-g(\beta,\alpha).
\end{equation} 
\item Invariance under centering the quad at any of the four vertices (see \eqref{quad eq at x12}, \eqref{quad eq at x1}, \eqref{quad eq at x2} for quad-equation \eqref{quad eq} centered at other three vertices of the quad).
\end{itemize}
\begin{lemma}
Quad-equation \eqref{quad eq} with coefficients satisfying \eqref{change orient} is symmetric with respect to centering the quad at any of its four vertices if and only if the following conditions are satisfied:
\begin{align}
& h(\alpha)h(\alpha+\pi)=c, \label{h sym} \\
& f(\alpha,\beta)f(\alpha+\pi,\beta)=-c, \label{f sym}\\
& g(\alpha,\beta)=c^{-1}f(\alpha,\beta)h(\alpha)h(\beta), \label{g thru fh}
\end{align}
with some constant $c\neq 0$. From \eqref{change orient}, \eqref{f sym} there follows also
\begin{equation}\label{f per}
f(\alpha,\beta)=f(\alpha+\pi,\beta+\pi).
\end{equation}
\end{lemma}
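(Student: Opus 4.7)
The plan is coefficient comparison. The key convention, dictated by traversing the quad in a consistent direction around its boundary and filling the two slots of $f,g,h$ with the labels of the two incident edges always taken in the black-to-white direction, yields the new label pairs $(\alpha+\pi,\beta+\pi)$ when centering at $x_{12}$, $(\beta+\pi,\alpha)$ at $x_1$, and $(\beta,\alpha+\pi)$ at $x_2$. This gives three recentered versions of \eqref{quad eq}; the symmetry required by the lemma is that each is a scalar multiple of the original.

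Centering at $x_{12}$ produces
$$f(\alpha+\pi,\beta+\pi)\,x_0-g(\alpha+\pi,\beta+\pi)\,x_{12}=i\bigl(h(\beta+\pi)\,x_1-h(\alpha+\pi)\,x_2\bigr),$$
and imposing proportionality with factor $\lambda$ yields the two relations $h(\alpha+\pi)=-\lambda h(\beta)$ and $h(\beta+\pi)=-\lambda h(\alpha)$. Since $\alpha$ and $\beta$ vary independently, these force $h(\alpha)h(\alpha+\pi)=h(\beta)h(\beta+\pi)=c$, which is exactly \eqref{h sym}, together with $\lambda=-c/(h(\alpha)h(\beta))$.

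Centering at $x_1$ produces
$$f(\beta+\pi,\alpha)\,x_2-g(\beta+\pi,\alpha)\,x_1=i\bigl(h(\alpha)\,x_0-h(\beta+\pi)\,x_{12}\bigr).$$
The proportionality constant $\mu$ read off from the $x_0$- and $x_{12}$-coefficients must satisfy $\mu=ih(\alpha)/g(\alpha,\beta)=ih(\beta+\pi)/f(\alpha,\beta)$, and after substituting $h(\beta+\pi)=c/h(\beta)$ from \eqref{h sym} this is precisely \eqref{g thru fh}. The $x_2$-coefficient of the same recentered equation then gives $f(\beta+\pi,\alpha)f(\alpha,\beta)=c$; a swap $\alpha\leftrightarrow\beta$ followed by \eqref{change orient} converts this into \eqref{f sym}. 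Centering at $x_2$ produces the same conditions by the analogous computation.

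Finally, \eqref{f per} is an algebraic consequence of \eqref{change orient} and \eqref{f sym}: they combine to give $f(\alpha,\beta+\pi)=-c/f(\alpha,\beta)$, whence $f(\alpha+\pi,\beta+\pi)=-c/f(\alpha,\beta+\pi)=f(\alpha,\beta)$. For the converse (\emph{if} part), once \eqref{h sym}, \eqref{f sym}, and \eqref{g thru fh} are assumed, a direct substitution (using the derived identity $f(\alpha+\pi,\beta+\pi)=f(\alpha,\beta)$) verifies that each of the three recentered equations is a nonzero scalar multiple of \eqref{quad eq}. The main bookkeeping obstacle is pinning down the label convention when viewing the quad from a white vertex, where the bipartite direction of the incident edges is reversed relative to the center; once that is settled, the entire lemma reduces to a short linear system in the coefficient functions $f,g,h$ that essentially solves itself.
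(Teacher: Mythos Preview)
Your proof is correct and follows essentially the same approach as the paper's: both write out the three recentered equations, impose proportionality of their coefficient vectors with the original, and extract \eqref{h sym}, \eqref{g thru fh}, \eqref{f sym} from selected ratios. The paper organizes the comparison as a projective equality of four $4$-tuples and picks off specific entry ratios, while you proceed equation by equation in a more narrative style, but the content is the same; your label pair $(\beta,\alpha+\pi)$ at $x_2$ differs from the paper's $(\alpha+\pi,\beta)$ only by the skew-symmetry \eqref{change orient}, which is harmless.
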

\begin{proof}
Equation centered at $x_{12}$:
\begin{equation}\label{quad eq at x12}
f(\alpha+\pi,\beta+\pi) x_{0}-g(\alpha+\pi,\beta+\pi)x_{12}=i\Big(h(\beta+\pi)x_1-h(\alpha+\pi)x_2\Big).
\end{equation}
Equation centered at $x_{1}$:
\begin{equation}\label{quad eq at x1}
f(\beta+\pi,\alpha) x_{2}-g(\beta+\pi,\alpha)x_{1}=i\Big(h(\alpha)x_0-h(\beta+\pi)x_{12}\Big).
\end{equation}
Equation centered at $x_{2}$:
\begin{equation}\label{quad eq at x2}
f(\alpha+\pi,\beta) x_{1}-g(\alpha+\pi,\beta)x_{2}=i\Big(h(\beta)x_0-h(\alpha+\pi)x_{12}\Big).
\end{equation}
Thus, the symmetry requirement reads:
\begin{equation*}
\begin{array}{cccccccccc}
 & [ &  -g(\alpha,\beta) & : &  f(\alpha,\beta) & : & ih(\alpha) & : & -ih(\beta) & ] \\
  = & [ &  -f(\alpha+\pi,\beta+\pi) & : & g(\alpha+\pi,\beta+\pi) & : & ih(\beta+\pi) & : & -ih(\alpha+\pi) & ]  \\
  = & [ &  ih(\alpha) & : & -ih(\beta+\pi) & : & g(\beta+\pi,\alpha) & : & -f(\beta+\pi,\alpha) & ]  \\
  = & [ &  ih(\beta) & : & -ih(\alpha+\pi) & : & -f(\alpha+\pi,\beta) & : & g(\alpha+\pi,\beta) & ] .
  \end{array}
\end{equation*}
\begin{itemize}
\item Comparing entries 3, 4 of the lines 1, 2, we find:
$$
h(\alpha):h(\beta)=h(\beta+\pi):h(\alpha+\pi),
$$
so that \eqref{h sym}  follows;
\item comparing entries 1, 2 of the lines 1, 3, we find:
$$
g(\alpha,\beta):f(\alpha,\beta)=h(\alpha):h(\beta+\pi),
$$
which, together with \eqref{h sym}, implies \eqref{g thru fh};
\item comparing entries 2, 3 of the lines 1, 4, we find:
$$
f(\alpha,\beta): h(\alpha)=-h(\alpha+\pi):f(\alpha+\pi,\beta),
$$
which, together with \eqref{h sym}, implies \eqref{f sym}.
\end{itemize} 
After this, one easily checks that under the conditions derived thus far, all requirements are satisfied (express everything through $h(\alpha)$, $h(\beta)$, and $f(\alpha,\beta)$).
\end{proof}

\begin{theorem}\label{th 1}
Equation \eqref{quad eq} with the symmetry conditions \eqref{change orient}--\eqref{g thru fh} is 3D-consistent if and only if the following functional equation is satisfied:
\begin{equation}\label{eq fhh}
f(\alpha,\beta)h(\alpha)h(\beta)+f(\beta,\gamma)h(\beta)h(\gamma)+f(\gamma,\alpha)h(\gamma)h(\alpha)=f(\alpha,\beta)f(\beta,\gamma)f(\gamma,\alpha),
\end{equation}
which, due to \eqref{g thru fh}, is also equivalent to  
\begin{equation}\label{g+g+g}
g(\alpha,\beta)+g(\beta,\gamma)+g(\gamma,\alpha)=c^{-1}f(\alpha,\beta)f(\beta,\gamma)f(\gamma,\alpha).
\end{equation}
\end{theorem}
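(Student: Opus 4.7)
The strategy is to test three-dimensional consistency directly. Consider the combinatorial cube $\{0,1\}^3$ with bipartition by parity: the black vertices are $x_0, x_{12}, x_{13}, x_{23}$ and the white ones are $x_1, x_2, x_3, x_{123}$. Assign labels $\alpha, \beta, \gamma$ to the three edges at $x_0$ and propagate them to parallel edges by the rhombic rule (opposite edges of a face differ by $\pi$, and parallel edges in the same black-to-white direction coincide). Treat $x_0, x_1, x_2, x_3$ as free initial data. Using the three quad-equations on the bottom faces $(x_0,x_1,x_{12},x_2)$, $(x_0,x_2,x_{23},x_3)$, $(x_0,x_3,x_{13},x_1)$, solve explicitly for $x_{12}, x_{13}, x_{23}$.

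Next, from each of the three top faces, write the quad-equation (centered at one of the two black vertices of that face) to obtain an expression of the form $i\,h(\cdot)\,x_{123}=\cdots$. For example, centering the equation on the face $y_1=1$ at $x_{12}$ gives $i h(\gamma)x_{123}=f(\beta+\pi,\gamma)x_{13}-g(\beta+\pi,\gamma)x_{12}+i h(\beta+\pi)x_1$, and analogously for the other two top faces. 3D-consistency means these three expressions for $x_{123}$ must agree. Equate two of them (the two carrying the factor $h(\gamma)$ are cleanest), substitute the previously computed $x_{12}, x_{13}, x_{23}$, and group the result as a linear combination of $x_0, x_1, x_2, x_3$ that must vanish identically. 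Using \eqref{h sym}, \eqref{f sym}, \eqref{g thru fh} in the equivalent forms $h(\alpha+\pi)=c/h(\alpha)$ and $f(\alpha+\pi,\beta)=-c/f(\alpha,\beta)$, replace all shifted arguments by unshifted ones.

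The key claim is then that the coefficient of $x_0$ vanishes identically as a consequence of the symmetry conditions alone — this reflects the fact, already observed in the text, that summing the three quad-equations around $x_0$ produces the Laplace-type relation at the black 3-star without further constraint. The coefficients of $x_1$, $x_2$, $x_3$, on the other hand, each reduce after routine manipulation to a common expression equivalent to \eqref{eq fhh}; because \eqref{eq fhh} is invariant (up to an overall sign governed by the antisymmetry of $f$) under cyclic permutation of $\alpha,\beta,\gamma$, the three coefficient equations coincide. Hence the consistency of two of the three $x_{123}$-expressions is equivalent to \eqref{eq fhh}, and by permuting the roles the third expression also agrees. The main obstacle is the algebraic bookkeeping of the coefficient of $x_1$: one has to combine terms over a common denominator $f(\alpha,\beta)f(\beta,\gamma)f(\gamma,\alpha)$ and use \eqref{h sym}, \eqref{f sym} several times to collapse the expression to precisely $f(\alpha,\beta)f(\beta,\gamma)f(\gamma,\alpha)-\sum_{\text{cyc}} f(\alpha,\beta)h(\alpha)h(\beta)$; equation \eqref{g+g+g} provides a convenient reformulation that makes the cancellation more transparent. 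The converse is immediate: if \eqref{eq fhh} holds, all coefficients vanish, the three expressions for $x_{123}$ coincide, and 3D-consistency follows.
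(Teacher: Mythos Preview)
Your approach is correct and reaches the same condition, but it is organized differently from the paper's argument. You solve for $x_{12},x_{13},x_{23}$ from the bottom faces, compute $x_{123}$ three ways from the top faces, substitute, and compare coefficients of $x_0,x_1,x_2,x_3$. The paper instead exploits the three-leg form directly: it writes the three quad-equations on the faces adjacent to the white vertex $x_1$, all centered at $x_1$, and simply adds them. The right-hand sides (the ``leg'' terms carrying $x_0,x_{12},x_{13}$) cancel \emph{before} any substitution, producing in one step a tetrahedron equation in $x_1,x_2,x_3,x_{123}$ only. 3D-consistency is then just the demand that this tetrahedron equation be invariant under cyclic shifts of $(1,2,3)$ and $(\alpha,\beta,\gamma)$, which yields \eqref{eq fhh} immediately. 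Your route is more computational but equally valid; the paper's shortcut makes the cyclic symmetry of the resulting condition structurally transparent rather than something to be checked after the fact.

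One small caveat: your justification for why the $x_0$-coefficient vanishes identically is not quite the right one. The Laplace relation at the black star of $x_0$ is a statement about $x_0,x_{12},x_{23},x_{13}$ obtained by eliminating the white fields $x_1,x_2,x_3$; it says nothing directly about $x_{123}$. The actual reason the $x_0$-dependence drops out is the tetrahedron property at the \emph{white} vertices: summing at $x_1$ (as the paper does) shows that each individual expression for $x_{123}$ is already independent of $x_0$, as a consequence of the three-leg form alone. Your claim is true, but its explanation lies in the cancellation at $x_1,x_2,x_3$, not at $x_0$.
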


\begin{proof}
One imposes the equation on all six faces of an elementary 3D cube, see Figure \ref{fig: 3D consist}. Then one has three different answers for $x_{123}$ in terms of the initial data $x_0$, $x_1$, $x_2$, $x_3$ (actually, they will not depend on $x_0$; this is the tetrahedron property which is a consequence of the three-leg form of the equation).

\begin{figure}[htbp]
\centering
  \subfloat[3D consistency of quad equations with labeling on directed edges]{\label{fig: 3D consist}
\begin{tikzpicture}[scale=1.7]

\draw[-{Stealth[scale=1.5]}] (0,0) -- (1,0) node[below] {$\alpha$};
\draw (1,0) -- (2,0); 
\draw[-{Stealth[scale=1.5]}] (0,0) -- (0,1) node[left] {$\gamma$};
\draw (0,1) -- (0,2); 
\draw[-{Stealth[scale=1.5]}] (2,2) -- (1,2) node[above] {$\alpha + \pi$};
\draw (1,2) -- (0,2); 
\draw[-{Stealth[scale=1.5]}] (2,2) -- (2,1) node[above left] {$\gamma + \pi$};
\draw (2,1) -- (2,0); 

\draw[dashed, -{Stealth[scale=1.5]}] (0,0) -- (.3,.4) node[right] {$\beta$};
\draw[dashed] (.3,.4) -- (.6,.8); 
\draw[-{Stealth[scale=1.5]}] (2.6,.8) -- (2.3,.4) node[below right] {$\beta + \pi$};
\draw (2.3,.4) -- (2,0); 
\draw[-{Stealth[scale=1.5]}] (.6,2.8) -- (.3,2.4) node[left] {$\beta + \pi$};
\draw (.3,2.4) -- (0,2); 
\draw[-{Stealth[scale=1.5]}] (2,2) -- (2.3,2.4) node[left] {$\beta$};
\draw (2.3,2.4) -- (2.6,2.8); 

\begin{scope}[xshift=6mm,yshift=8mm]
	\draw[dashed, -{Stealth[scale=1.5]}] (2,0) -- (1,0) node[below] {$\alpha + \pi$};
	\draw[dashed] (1,0) -- (0,0); 
	\draw[dashed, -{Stealth[scale=1.5]}] (0,2) -- (0,1) node[below right] {$\gamma + \pi $};
	\draw[dashed] (0,1) -- (0,0); 
	\draw[-{Stealth[scale=1.5]}] (0,2) -- (1,2) node[below] {$\alpha$};
	\draw (1,2) -- (2,2); 
	\draw[-{Stealth[scale=1.5]}] (2,0) -- (2,1) node[right] {$\gamma$};
	\draw (2,1) -- (2,2); 
	
	\draw[fill=white] (0,0) circle(2pt) node[above left] {$x_2$};
	\draw[fill=white] (2,2) circle(2pt) node[above right] {$x_{123}$};
	\draw[fill=black] (2,0) circle(2pt) node[below right] {$x_{12}$};
	\draw[fill=black] (0,2) circle(2pt) node[above left] {$x_{23}$};
\end{scope}

\draw[fill=black] (0,0) circle(2pt) node[below left] {$x_0$};
\draw[fill=black] (2,2) circle(2pt) node[below right] {$x_{13}$};
\draw[fill=white] (2,0) circle(2pt) node[below right] {$x_1$};
\draw[fill=white] (0,2) circle(2pt) node[left] {$x_3\;$};

\end{tikzpicture}}
\qquad
\subfloat[3D corner adjacent to the vertex $x_1$]{\label{fig: 3D corner}
\begin{tikzpicture}[scale=1.7]

\draw[-{Stealth[scale=1.5]}] (0,0) -- (1,0) node[below] {$\alpha$};
\draw (1,0) -- (2,0); 
\draw[-{Stealth[scale=1.5]}] (0,0) -- (0,1) node[left] {$\gamma$};
\draw (0,1) -- (0,2); 
\draw[-{Stealth[scale=1.5]}] (2,2) -- (1,2) node[above] {$\alpha + \pi$};
\draw (1,2) -- (0,2); 
\draw[-{Stealth[scale=1.5]}] (2,2) -- (2,1) node[above left] {$\gamma + \pi$};
\draw (2,1) -- (2,0); 

\draw[dashed, -{Stealth[scale=1.5]}] (0,0) -- (.3,.4) node[right] {$\beta$};
\draw[dashed] (.3,.4) -- (.6,.8); 
\draw[-{Stealth[scale=1.5]}] (2.6,.8) -- (2.3,.4) node[below right] {$\beta + \pi$};
\draw (2.3,.4) -- (2,0); 
\draw[-{Stealth[scale=1.5]}] (2,2) -- (2.3,2.4) node[left] {$\beta$};
\draw (2.3,2.4) -- (2.6,2.8); 

\begin{scope}[xshift=6mm,yshift=8mm]
	\draw[dashed, -{Stealth[scale=1.5]}] (2,0) -- (1,0) node[below] {$\alpha + \pi$};
	\draw[dashed] (1,0) -- (0,0); 
	\draw[-{Stealth[scale=1.5]}] (2,0) -- (2,1) node[right] {$\gamma$};
	\draw (2,1) -- (2,2); 
	
	\draw[fill=white] (0,0) circle(2pt) node[above left] {$x_2$};
	\draw[fill=white] (2,2) circle(2pt) node[above right] {$x_{123}$};
	\draw[fill=black] (2,0) circle(2pt) node[below right] {$x_{12}$};
\end{scope}

\draw[fill=black] (0,0) circle(2pt) node[below left] {$x_0$};
\draw[fill=black] (2,2) circle(2pt) node[below right] {$x_{13}$};
\draw[fill=white] (2,0) circle(2pt) node[below right] {$x_1$};
\draw[fill=white] (0,2) circle(2pt) node[above left] {$x_3$};

\draw[-{>[scale=1]}] (2.24,.32) arc (53:90:.4);
\draw[-{>[scale=1]}] (2,.3) arc (90:180:.3);
\draw[dashed, -{>[scale=1]}] (1.8,0) arc (180:53:.2);

\end{tikzpicture}}
\caption{}
\label{Fig: 3D consistency and 3D corner}
\end{figure}


The first answer is obtained by using three equations adjacent to the vertex $x_1$ (and, for convenience, centered at this vertex), see Figure \ref{fig: 3D corner}.
These three equations read:
\begin{align*}
& f(\gamma+\pi,\alpha)x_3-g(\gamma+\pi,\alpha)x_1=i\big(h(\alpha)x_0-h(\gamma+\pi)x_{13}\big),\\
& f(\alpha,\beta+\pi)x_2-g(\alpha,\beta+\pi)x_1=i\big(h(\beta+\pi)x_{12}-h(\alpha)x_0\big),\\
& f(\beta+\pi,\gamma+\pi)x_{123}-g(\beta+\pi,\gamma+\pi)x_1 =i\big(h(\gamma+\pi)x_{13}-h(\beta+\pi)x_{12}\big).
\end{align*}
Adding these three equations, we see that everything on the right-hand side cancels away (as it should), and we are left with the \emph{tetrahedron equation} (i.e., equation relating $x_1$, $x_2$, $x_3$ and $x_{123}$) centered at $x_1$:
\begin{align}
f(\beta+\pi,\gamma+\pi)x_{123} & -\big(g(\gamma+\pi,\alpha)+g(\alpha,\beta+\pi)+g(\beta+\pi,\gamma+\pi)\big)x_1 \nonumber\\
&+f(\alpha,\beta+\pi)x_2+f(\gamma+\pi,\alpha)x_3=0.
\end{align}
Upon using the symmetry properties \eqref{change orient}--\eqref{g thru fh}, this is put into the form
\begin{align}
x_{123} & +\Big(\dfrac{ch(\alpha)}{f(\beta,\gamma)f(\gamma,\alpha)h(\gamma)}+\dfrac{ch(\alpha)}{f(\alpha,\beta)f(\beta,\gamma)h(\beta)}-\dfrac{c}{h(\beta)h(\gamma)}\Big)x_1 \nonumber\\
&-\dfrac{c}{f(\alpha,\beta)f(\beta,\gamma)}x_2-\dfrac{c}{f(\beta,\gamma)f(\gamma,\alpha)}x_3=0.
\end{align}
The necessary and sufficient condition for this equation to be symmetric with respect to cyclic shifts of $(1,2,3)$ and of $(\alpha,\beta,\gamma)$, is given by
$$
\dfrac{h(\alpha)}{f(\beta,\gamma)f(\gamma,\alpha)h(\gamma)}+\dfrac{h(\alpha)}{f(\alpha,\beta)f(\beta,\gamma)h(\beta)}-\dfrac{1}{h(\beta)h(\gamma)}=
-\dfrac{1}{f(\gamma,\alpha)f(\alpha,\beta)}.
$$
This is equivalent to condition \eqref{eq fhh} (which is, luckily, in its turn symmetric with respect to cyclic shifts of $(\alpha,\beta,\gamma)$).
\end{proof}

{\bf Remark.} In the 3D consistent case, the tetrahedron equation takes the symmetric form
\begin{equation}
 f(\beta,\gamma)x_1+f(\gamma,\alpha)x_2+f(\alpha,\beta)x_3-c^{-1}f(\alpha,\beta)f(\beta,\gamma)f(\gamma,\alpha)x_{123}=0.
\end{equation}
By virtue of equation \eqref{g+g+g}, this is nothing but the Laplace equation on the 3D corner centered at $x_{123}$:
\begin{equation}
 \Big(f(\beta,\gamma)x_1-g(\beta,\gamma)x_{123}\Big)+\Big(f(\gamma,\alpha)x_2-g(\gamma,\alpha)x_{123}\Big)+\Big(f(\alpha,\beta)x_3-g(\alpha,\beta)x_{123}\Big)=0.
\end{equation}

\medskip

\begin{theorem}\label{th 2}
If functions $f$, $h$ satisfy equation \eqref{eq fhh} and $f$ is skew-symmetric, as in \eqref{change orient}, then $f$ satisfies  the following functional equation:
\begin{align}\label{eq fff}
f(\alpha,\beta)f(\beta,\gamma)f(\gamma,\alpha)+f(\beta,\alpha)f(\alpha,\delta)f(\delta,\beta)&+f(\gamma,\beta)f(\beta,\delta)f(\delta,\gamma)\nonumber \\
&+f(\alpha,\gamma)f(\gamma,\delta)f(\delta,\alpha)=0.
\end{align}
Conversely, if \eqref{eq fff} is satisfied, then \eqref{eq fhh} is satisfied with $h(\alpha)=if(\alpha,\delta)$ for any $\delta$.
\end{theorem}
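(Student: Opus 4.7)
The plan is to prove both implications by direct substitution, using the skew-symmetry \eqref{change orient} of $f$ as the essential mechanism of cancellation. For the forward direction, I would rewrite \eqref{eq fhh} as the identity
\begin{equation*}
f(\alpha,\beta)f(\beta,\gamma)f(\gamma,\alpha) = f(\alpha,\beta)h(\alpha)h(\beta)+f(\beta,\gamma)h(\beta)h(\gamma)+f(\gamma,\alpha)h(\gamma)h(\alpha),
\end{equation*}
and apply it in turn to the four oriented triangular faces of the tetrahedron with vertices $\alpha,\beta,\gamma,\delta$, namely the triples $(\alpha,\beta,\gamma)$, $(\beta,\alpha,\delta)$, $(\gamma,\beta,\delta)$, $(\alpha,\gamma,\delta)$ that appear in \eqref{eq fff}. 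Summing the four resulting equations, the left-hand sides assemble into precisely the four terms of \eqref{eq fff}. On the right-hand sides, each unordered pair $\{\mu,\nu\}\subset\{\alpha,\beta,\gamma,\delta\}$ contributes the product $h(\mu)h(\nu)$ exactly twice, once multiplied by $f(\mu,\nu)$ and once by $f(\nu,\mu)$; by skew-symmetry these two contributions cancel, and the total vanishes. Combinatorially this reflects that the chosen four triples traverse each edge of the tetrahedron once in each direction, which is just the closeness of the 2-form $\omega$ on the boundary of the tetrahedron highlighted in the introduction.

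For the converse, I would fix $\delta$ and substitute $h(\alpha)=if(\alpha,\delta)$ into the left-hand side of \eqref{eq fhh}. Using $i^2=-1$ together with skew-symmetry, one rewrites $h(\mu)h(\nu)=-f(\mu,\delta)f(\nu,\delta)=f(\mu,\delta)f(\delta,\nu)$, so each term $f(\mu,\nu)h(\mu)h(\nu)$ becomes a triple product of $f$'s, and one further application of \eqref{change orient} to swap the first factor turns it into exactly one of the last three summands of \eqref{eq fff} with its correct sign. Transferring the right-hand side $f(\alpha,\beta)f(\beta,\gamma)f(\gamma,\alpha)$ of \eqref{eq fhh} to the other side then recovers \eqref{eq fff}. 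Since $\delta$ was arbitrary, the equation \eqref{eq fhh} holds for every choice of $h$ of this form.

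The only real hazard is sign bookkeeping: one must track the minus sign coming from $i^2=-1$ together with the three skew-symmetry flips in the converse direction, and verify that they conspire to align the two equations. Beyond that, no structural insight beyond the tetrahedral closeness picture emphasized in the introduction is needed, and both directions amount to a finite combinatorial verification on the six edges of a tetrahedron.
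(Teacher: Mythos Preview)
Your proposal is correct and is essentially the same argument as the paper's. The paper phrases the forward direction abstractly---equation \eqref{g+g+g} says the $2$-form $\omega$ is exact, hence closed on the tetrahedron---while you carry out the underlying edge-by-edge cancellation explicitly; since $g(\mu,\nu)=c^{-1}f(\mu,\nu)h(\mu)h(\nu)$, your pairwise cancellations on the right-hand side are precisely the skew-symmetry of the $1$-form $g$ that makes $d(dg)=0$, and the converse is handled identically in both.
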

\begin{proof} The easiest way to see why equation \eqref{eq fff} should be satisfied is the following nice combinatorial interpretation. On triangulated surfaces with vertices carrying the fields $\alpha$, $\beta$, $\gamma$, etc., introduce a discrete 2-form $\omega$ by
$$
\omega\bigg( \tri{\alpha}{\beta}{\gamma} \bigg)=f(\alpha,\beta)f(\beta,\gamma)f(\gamma,\alpha).
$$
Then equation \eqref{g+g+g} expresses the fact that $\omega$ is exact, with $c^{-1}\omega=dg$ for the discrete 1-form $g$ taking the value $g(\alpha,\beta)$ on the (directed) edge $(\alpha,\beta)$. This is equivalent to the closeness of the 2-form $\omega$ on a tetrahedron:
$$
d\omega\bigg( \tetr{\alpha}{\beta}{\gamma}{\delta} \bigg)=0.
$$
Equation \eqref{eq fff} expresses precisely this latter condition. The last statement of the theorem is straightforward.
\end{proof}

\section{Solving functional equation \eqref{eq fff}}
\label{sect solve fff}

We solve functional equation \eqref{eq fff} in the particular case when the skew-symmetric function $f(\alpha,\beta)$ depends only on the difference of its arguments: $f(\alpha,\beta)=f(\alpha-\beta)$ (where we use the same notation $f$ for an odd function of one argument). Here function $f$ satisfies the following symmetry conditions:
\begin{equation}\label{ff}
f(-\alpha)=-f(\alpha), \quad f(\alpha)f(\alpha+\pi)=-c.
\end{equation}
The functional equations itself takes the form
\begin{align}\label{eq fff dif}
f(\alpha-\beta)f(\beta-\gamma)f(\gamma-\alpha)&+f(\beta-\alpha)f(\alpha-\delta)f(\delta-\beta)+f(\gamma-\beta)f(\beta-\delta)f(\delta-\gamma)\nonumber \\
&+f(\alpha-\gamma)f(\gamma-\delta)f(\delta-\alpha)=0.
\end{align}

\begin{theorem} \label{th 3}
Odd meromorphic functions satisfying equation \eqref{eq fff dif} and $f'(0)\neq 0$ are exactly odd meromorphic solutions of a differential equation $(f')^2=c_0f^4 +c_1f^2+c_2$.
If $c_0\neq 0$, they are given, up to a constant factor and a re-scaling of the argument by a constant complex factor, by \footnote{We use notations of Whittaker and Watson \cite{WW} for elliptic theta-functions $\vartheta_k(\alpha)=\vartheta_k(\alpha,q)=\vartheta_k(\alpha|\tau)$, so that, for instance,
$\vartheta_3(\alpha,q)=\sum_{n=-\infty}^{\infty} q^{n^2}\exp(2in\alpha)$, 
$q=\exp(\pi\tau i)$.}
$$
f(\alpha)=\frac{\vartheta_1(\alpha)}{\vartheta_2(\alpha)}.
$$ 
If $c_0=0$, the underlying elliptic curve degenerates to a rational one, and we get degenerate solutions of \eqref{eq fff dif} expressed through trigonometric or rational functions. 
\end{theorem}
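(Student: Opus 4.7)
The plan is to reduce the four-variable functional equation~\eqref{eq fff dif} to a first-order ODE for $f$ by successive specialization, and then to handle the converse by a meromorphic uniqueness argument. After the preliminary change of variables $x=\alpha-\delta$, $y=\beta-\delta$, $z=\gamma-\delta$, the oddness of $f$ recasts~\eqref{eq fff dif} as the equivalent symmetric three-variable identity
\begin{equation*}
E(x,y,z):=f(x-y)f(y-z)f(z-x)+f(x-y)f(x)f(y)+f(y-z)f(y)f(z)+f(z-x)f(z)f(x)=0.
\end{equation*}
Because $f(0)=0$, the specialization $z=0$ yields $E(x,y,0)\equiv 0$ trivially; the first genuine consequence is obtained by differentiating $E$ in $z$ and then setting $z=0$. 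Using $f(0)=0$, oddness of $f$, and evenness of $f'$, a short computation collapses this to the two-variable subtraction formula
\begin{equation*}
f(x-y)\bigl[f(x)f'(y)+f(y)f'(x)\bigr]=k\bigl[f(x)^{2}-f(y)^{2}\bigr],\qquad k:=f'(0)\neq 0. \qquad(\star)
\end{equation*}

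From $(\star)$ I extract the ODE by Taylor-expanding in $y$ around $y=0$. The orders $y^{0}$ and $y^{1}$ are automatic, using that $f''(0)=0$ (which is forced by the oddness of $f$), while the coefficient of $y^{2}$ gives the second-order equation
\begin{equation*}
f(x)f''(x)-2f'(x)^{2}+c_{1}f(x)^{2}+2c_{2}=0,\qquad c_{1}:=f'''(0)/k,\ \ c_{2}:=k^{2}.
\end{equation*}
Setting $G(x):=f'(x)^{2}-c_{1}f(x)^{2}-c_{2}$ and substituting $ff''=2(f')^{2}-c_{1}f^{2}-2c_{2}$ into $G'=2f'f''-2c_{1}ff'$, one finds $G'=(4f'/f)\,G$, which integrates to $G=c_{0}f^{4}$ for some constant $c_{0}$. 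This is exactly the claimed first-order ODE $(f')^{2}=c_{0}f^{4}+c_{1}f^{2}+c_{2}$.

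For the converse, one must show that every odd meromorphic solution of the ODE satisfies~\eqref{eq fff dif}. I would argue by meromorphic uniqueness: the ODE together with $f(0)=0$ and the prescribed $f'(0)=k$ determines $f$ uniquely, and both sides of $(\star)$ viewed as functions of $y$ (with $x$ fixed) satisfy a Cauchy problem whose jets match at $y=0$ to sufficient order by virtue of the ODE, so $(\star)$ holds. A similar argument in $z$, combined with the fact that $E(x,y,z)$ vanishes at $z\in\{0,x,y\}$ and that the residues at the poles coming from $f(y-z)$ and $f(z-x)$ cancel by $(\star)$, forces $E\equiv 0$. The explicit form $f=\vartheta_{1}/\vartheta_{2}$ when $c_{0}\neq 0$, and the rational or trigonometric degenerations when $c_{0}=0$, are then the classical parametrization of odd meromorphic solutions of a first-order equation with quartic right-hand side. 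The main obstacle I anticipate is the integration step from the second-order to the first-order ODE: one has to spot that $G/f^{4}$ is a first integral and keep careful track of the algebraic cancellations, and in the converse direction one must be careful not to invoke the explicit form of $f$ in advance.
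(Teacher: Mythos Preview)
Your forward direction is correct and essentially matches the paper's argument. The paper also differentiates the four-variable identity once and specializes (there: in $\alpha$, setting $\alpha=\beta$) to obtain exactly your subtraction formula $(\star)$, then expands in one variable to get the second-order relation, and integrates to the first-order ODE. Your route is slightly more streamlined: you expand $(\star)$ directly and hit the second-order ODE at order $y^{2}$, whereas the paper first passes to the ratio
\[
\frac{f(\alpha+\beta)}{f(\alpha-\beta)}=\frac{f(\alpha)f'(\beta)+f'(\alpha)f(\beta)}{f(\alpha)f'(\beta)-f'(\alpha)f(\beta)}
\]
and reads off a third-order relation at order $\beta^{3}$ before integrating once. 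Your integration step via $G'=(4f'/f)G$ is also a tidy variant of the paper's computation, which instead rewrites the second-order ODE as $a(f'/f^{2})'=c/f^{3}-6b/f$ and multiplies by $f'/f^{2}$ before integrating.

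The genuine difference is in the converse. The paper does \emph{not} argue abstractly: it verifies directly that $f=\vartheta_{1}/\vartheta_{2}$ satisfies the closed-form identity
\[
\frac{\vartheta'_2(\alpha-\beta)}{\vartheta_2(\alpha-\beta)}+\frac{\vartheta'_2(\beta-\gamma)}{\vartheta_2(\beta-\gamma)}+\frac{\vartheta'_2(\gamma-\alpha)}{\vartheta_2(\gamma-\alpha)}
=-\vartheta_3\vartheta_4\,\frac{\vartheta_1(\alpha-\beta)\vartheta_1(\beta-\gamma)\vartheta_1(\gamma-\alpha)}{\vartheta_2(\alpha-\beta)\vartheta_2(\beta-\gamma)\vartheta_2(\gamma-\alpha)},
\]
which immediately gives the additive decomposition $f(\alpha-\beta)f(\beta-\gamma)f(\gamma-\alpha)=g_0(\alpha-\beta)+g_0(\beta-\gamma)+g_0(\gamma-\alpha)$ and hence the tetrahedral identity. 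Your proposal, by contrast, appeals to a ``meromorphic uniqueness / Cauchy problem'' argument, and as written this is only a sketch: you have not specified which ODE in $y$ the two sides of $(\star)$ satisfy, nor checked the requisite number of jet conditions, and the passage from $(\star)$ back to $E\equiv 0$ via residue cancellation is asserted but not carried out. This line \emph{can} be made rigorous (e.g.\ via a Liouville-type argument: view $E(x,y,z)$ as an elliptic function of $z$, check that the residues at the simple poles $z\in\{x,y\}+\Lambda$ vanish precisely by $(\star)$, and that the remaining holomorphic function vanishes at $z=0$), but you would need to actually execute it. The payoff of your approach, if completed, is that it does not rely on knowing the explicit theta representation in advance and handles the degenerate cases uniformly; the paper's approach is shorter but leans on a classical identity.
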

\begin{proof} Differentiate equation \eqref{eq fff dif} with respect to $\alpha$ and then set $\alpha=\beta$:
\begin{align}\label{eq fff aux1}
& f'(0)f(\beta-\gamma)f(\gamma-\beta)-f'(0)f(\beta-\delta)f(\delta-\beta)\nonumber \\
&\qquad +f'(\beta-\gamma)f(\gamma-\delta)f(\delta-\beta)-f(\beta-\gamma)f(\gamma-\delta)f'(\delta-\beta)=0.
\end{align}
Re-defining $\beta-\gamma\rightsquigarrow\beta$, $\beta-\delta\rightsquigarrow\alpha$, we write this as 
\begin{equation}\label{eq fff aux2}
f'(0)\big(f^2(\alpha)-f^2(\beta)\big)=f(\alpha-\beta)\big(f(\alpha)f'(\beta)+f'(\alpha)f(\beta)\big).
\end{equation}
Changing here $\beta$ to $-\beta$, we find:
\begin{equation}\label{eq fff aux3}
f'(0)\big(f^2(\alpha)-f^2(\beta)\big)=f(\alpha+\beta)\big(f(\alpha)f'(\beta)-f'(\alpha)f(\beta)\big).
\end{equation}
As a consequence:
\begin{equation}\label{eq fff aux4}
\frac{f(\alpha+\beta)}{f(\alpha-\beta)}=\frac{f(\alpha)f'(\beta)+f'(\alpha)f(\beta)}{f(\alpha)f'(\beta)-f'(\alpha)f(\beta)}.
\end{equation}
This equation has been solved in \cite{Cooper}. For convenience of the reader, we reproduce here this solution. Expand 
$$
f(\alpha+\beta)\big(f(\alpha)f'(\beta)-f'(\alpha)f(\beta)\big)=f(\alpha-\beta)\big(f(\alpha)f'(\beta)+f'(\alpha)f(\beta)\big)
$$
about $\beta=0$, upon using $f(\beta)=a\beta+b\beta^3+\ldots$:
\begin{eqnarray*}
\lefteqn{\Big(f(\alpha)+\beta f'(\alpha)+\frac{\beta^2}{2}f''(\alpha)+\frac{\beta^3}{6}f'''(\alpha)+\ldots\Big)}\\
&&\times\Big(af(\alpha)+3b\beta^2f(\alpha)-a\beta f'(\alpha)-b\beta^3f'(\alpha)+\ldots\Big)\\
\lefteqn{=\Big(f(\alpha)-\beta f'(\alpha)+\frac{\beta^2}{2}f''(\alpha)-\frac{\beta^3}{6}f'''(\alpha)+\ldots\Big)}\\
&&\times\Big(af(\alpha)+3b\beta^2f(\alpha)+a\beta f'(\alpha)+b\beta^3f'(\alpha)+\ldots\Big).
\end{eqnarray*}
The first nontrivial consequence is obtained by comparing coefficients by $\beta^3$:
$$
\frac{a}{6}f'''(\alpha)f(\alpha)-\frac{a}{2}f''(\alpha)f'(\alpha)+2bf'(\alpha)f(\alpha)=0.
$$
Integrating once, we find:
$$
af''(\alpha)f(\alpha)-2a(f'(\alpha))^2+6bf^2(\alpha)=c,
$$
or
$$
a\left(\frac{f'(\alpha)}{f^2(\alpha)}\right)'+\frac{6b}{f(\alpha)}=\frac{c}{f^3(\alpha)}.
$$
Multiplying by $f'(\alpha)/f^2(\alpha)$ and integrating again, we find:
$$
\frac{a}{2}\left(\frac{f'(\alpha)}{f^2(\alpha)}\right)^2-\frac{3b}{f^2(\alpha)}=-\frac{c}{4f^4(\alpha)}+d.
$$
This proves that $f$ necessarily solves a differential equation $(f')^2=c_0f^4+c_1f^2+c_2$, since $a=f'(0)\neq 0$. It is well-known that, if $c_0\neq 0$, $c_2\neq 0$ and $c_1^2-4c_0c_2\neq 0$, then odd solutions of this differential equation are given, up to a constant factor and a re-scaling of the argument,  by $f(\alpha)=\vartheta_1(\alpha)/\vartheta_2(\alpha)$. 

It remains to show that this function indeed satisfies functional equation \eqref{eq fff dif}. This follows from the following fact: 
\begin{equation}\label{log der}
\frac{\vartheta'_2(\alpha-\beta)}{\vartheta_2(\alpha-\beta)}+\frac{\vartheta'_2(\beta-\gamma)}{\vartheta_2(\beta-\gamma)}+\frac{\vartheta'_2(\gamma-\alpha)}{\vartheta_2(\gamma-\alpha)}=
-\vartheta_3\vartheta_4\frac{\vartheta_1(\alpha-\beta)\vartheta_1(\beta-\gamma)\vartheta_1(\gamma-\alpha)}{\vartheta_2(\alpha-\beta)\vartheta_2(\beta-\gamma)\vartheta_2(\gamma-\alpha)}.
\end{equation}
In other words, function $f(\alpha)=\vartheta_1(\alpha)/\vartheta_2(\alpha)$ satisfies an analog of equation \eqref{g+g+g}:
\begin{equation}\label{g+g+g dif}
f(\alpha-\beta)f(\beta-\gamma)f(\gamma-\alpha)=g_0(\alpha-\beta)+g_0(\beta-\gamma)+g_0(\gamma-\alpha)
\end{equation}
with an odd function 
$$
g_0(\alpha)=-\frac{1}{\vartheta_3\vartheta_4}\cdot\frac{\vartheta'_2(\alpha)}{\vartheta_2(\alpha)}, 
$$
which immediately implies \eqref{eq fff dif}. 
\end{proof}

Real-analytic solutions turn out to correspond to two types of Riemann surfaces of genus 1 with an antiholmorphic involution -- complex tori of a rectangular and of a rhombic type.
\begin{theorem}\label{th 4}
All solutions of equation \eqref{eq fff dif} which satisfy conditions of Theorem \ref{th 3} and are real-valued for $\alpha\in\bbR$ and fulfill \eqref{ff}, are given by
\begin{equation}\label{f real}
 f(\alpha)=\frac{\vartheta_1(\alpha/2)}{\vartheta_2(\alpha/2)},
\end{equation}
where the parameter $\tau$ of the theta-functions satisfies 
\begin{itemize}
\item either $\tau=i\tau_0$ with $\tau_0\in\bbR_{>0}$, so that the complex torus generated by $\pi$, $\pi\tau$ is of the rectangular type,
\item or $\tau=\frac{1}{2}+i\tau_0$ with $\tau_0\in\bbR_{>0}$, so that the complex torus generated by $\pi$, $\pi\tau$ is of the rhombic type.
\end{itemize}
\end{theorem}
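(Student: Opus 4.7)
The approach starts from the classification of Theorem~\ref{th 3} and imposes two further constraints: the multiplicative symmetry (\ref{ff}) and reality on $\bbR$. First I would use (\ref{ff}) to fix the scaling of the argument. Theorem~\ref{th 3} gives $f(\alpha)=A\,\vartheta_1(B\alpha|\tau)/\vartheta_2(B\alpha|\tau)$ for some $A,B\in\bbC^*$ and some $\tau$ with positive imaginary part. The condition $f(\alpha)f(\alpha+\pi)=-c$ forces the shift by $\pi$ to interchange the zero-lattice of $\vartheta_1$ with the zero-lattice of $\vartheta_2$ in the variable $B\alpha$, hence $B\pi\equiv\pi/2\pmod{\pi\bbZ+\pi\tau\bbZ}$; the canonical choice $B=1/2$, combined with the half-period identities $\vartheta_1(z+\pi/2)=\vartheta_2(z)$ and $\vartheta_2(z+\pi/2)=-\vartheta_1(z)$, yields $f(\alpha+\pi)=-A^2/f(\alpha)$, so (\ref{ff}) holds with $c=A^2$.

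Next I would impose reality. Term-by-term conjugation of the defining series gives $\overline{\vartheta_j(z|\tau)}=\vartheta_j(\overline z\,|-\overline\tau)$ for $j=1,2$, so for $\alpha\in\bbR$,
\begin{equation*}
\overline{f(\alpha)}=\overline A\,\frac{\vartheta_1(\alpha/2\,|-\overline\tau)}{\vartheta_2(\alpha/2\,|-\overline\tau)}.
\end{equation*}
Reality forces $A\in\bbR$ (so I normalize $A=1$) and the equality of the meromorphic functions $\vartheta_1/\vartheta_2$ at parameters $\tau$ and $-\overline\tau$. Since both theta-functions pick up the common factor $e^{i\pi/4}$ under $T:\tau\mapsto\tau+1$, the ratio is invariant under this shift, whereas the other generator $S:\tau\mapsto-1/\tau$ nontrivially mixes the four theta-functions and alters the zero/pole lattice of the ratio. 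Thus the stabilizer of $\vartheta_1/\vartheta_2$ in $\operatorname{PSL}(2,\bbZ)$ is the cyclic group $\langle T\rangle$, and reality is equivalent to $-\overline\tau\equiv\tau\pmod\bbZ$. Writing $\tau=p+i\tau_0$ with $\tau_0>0$, this reads $-2p\in\bbZ$, so after reduction modulo $\bbZ$ either $\tau=i\tau_0$ (rectangular) or $\tau=\tfrac12+i\tau_0$ (rhombic).

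Finally I would verify the two cases directly. For $\tau=i\tau_0$ the nome $q=e^{-\pi\tau_0}$ is real-positive, the $\vartheta$-series have real coefficients, and $f$ is manifestly real on $\bbR$. For $\tau=\tfrac12+i\tau_0$ one has $-\overline\tau=\tau-1$, and the $T$-invariance of $\vartheta_1/\vartheta_2$ immediately gives $\overline{f(\alpha)}=f(\alpha)$ on $\bbR$. In both cases the lattice $\pi\bbZ+\pi\tau\bbZ$ is stable under complex conjugation, which induces the antiholomorphic involution $z\mapsto\overline z$ on the corresponding torus, of rectangular type in the first case and of rhombic type in the second.

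The main technical obstacle is the determination of the modular stabilizer of the ratio $\vartheta_1/\vartheta_2$: one must exclude the possibility that some element of $\operatorname{PSL}(2,\bbZ)$ outside $\langle T\rangle$ preserves it, for otherwise the reality condition $-\overline\tau\equiv\tau$ would admit further values of $\tau$. Once this and the half-period identities above are in hand, the rest of the argument is routine bookkeeping.
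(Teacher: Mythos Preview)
Your approach is correct in outline but genuinely different from the paper's. The paper proceeds enumeratively: citing \cite{Cooper}, it lists all real-valued odd elliptic solutions of the ODE $(f')^2=c_0f^4+c_1f^2+c_2$ with $\tau\in i\bbR_{>0}$, obtaining the four families $\vartheta_1/\vartheta_2$, $\vartheta_1/\vartheta_3$, $\vartheta_1/\vartheta_4$, $\vartheta_1\vartheta_3/(\vartheta_2\vartheta_4)$ according to the sign pattern of the roots of $c_0t+c_1+c_2/t$. It then checks directly which of these satisfy $f(\alpha)f(\alpha+\pi)=-c$, finding only the first and the fourth, and finally rewrites the fourth as the first with $\tau=\tfrac12+i\tau_0$ via an explicit product computation (carried out in the subsequent section). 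Your route inverts the order---impose \eqref{ff} first to pin down $B=1/2$, then impose reality via the conjugation law for theta-series---and is more conceptual; it also explains, rather than merely verifies, why exactly the rectangular and rhombic tori appear.

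One point deserves tightening. Your ``modular stabilizer'' framing is slightly off: the modular group acts on $\tau$ \emph{together with} a rescaling of $z$ (e.g.\ $S$ sends $\vartheta_j(z|\tau)$ to theta-values at $z/\tau$), so asking which elements of $\operatorname{PSL}(2,\bbZ)$ fix the function $z\mapsto\vartheta_1(z|\tau)/\vartheta_2(z|\tau)$ is not the natural question. What you actually need is: for which pairs $\tau,\tau'$ in the upper half-plane are the functions $z\mapsto\vartheta_1(z|\tau)/\vartheta_2(z|\tau)$ and $z\mapsto\vartheta_1(z|\tau')/\vartheta_2(z|\tau')$ identical? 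This is settled cleanly by comparing divisors: the zero set is $\pi\bbZ+\pi\tau\bbZ$ and the pole set is $\tfrac{\pi}{2}+\pi\bbZ+\pi\tau\bbZ$, so equality forces the lattices to coincide with $\pi$ as a common basis vector, hence $\tau'-\tau\in\bbZ$. With this in hand your reality condition $-\overline\tau\equiv\tau\pmod\bbZ$ follows and the conclusion is immediate. A similar remark applies to your ``canonical choice'' $B=1/2$: the other solutions $B\in\tfrac12+\bbZ+\tau\bbZ$ of the congruence should be absorbed into a relabelling of $\tau$, which again is best argued via the divisor.
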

\begin{proof}
A discussion of the real-valued solutions is also contained in \cite{Cooper}. 
Solutions from Theorem \ref{th 3} which are real-valued for $\alpha\in\mathbb R$ are (up to a constant real factor)
\begin{align}
& f(\alpha)=\frac{\vartheta_1(\alpha/2)}{\vartheta_2(\alpha/2)}, \label{f1}\\
& f(\alpha)=\frac{\vartheta_1(\alpha/2)}{\vartheta_3(\alpha/2)}, \label{f2}\\
& f(\alpha)=\frac{\vartheta_1(\alpha/2)}{\vartheta_4(\alpha/2)}, \label{f3}\\ 
& f(\alpha)=\frac{\vartheta_1(\alpha/2)\vartheta_3(\alpha/2)}{\vartheta_2(\alpha/2)\vartheta_4(\alpha/2)}, \label{f4}
\end{align}
where the parameters of the theta functions satisfy $\tau\in i\mathbb R_{>0}\Leftrightarrow 0<q<1$, 
and their degenerations. Formulas \eqref{f1}--\eqref{f3} correspond to the case when the real numbers $c_0$, $c_1$, $c_2$ satisfy $c_1^2-4c_0c_2>0$; in the case when the quadratic polynomial $c_0f^2+c_1f+c_2$ has one positive and one negative root, we get solution \eqref{f1}, if this polynomial 
has two negative roots, we get solution \eqref{f2}, and if this polynomial has two positive roots, we get solution \eqref{f3}.  Solutuion \eqref{f4} corresponds to the case $c_1^2-4c_0c_2<0$, when $c_0f^2+c_1f+c_2$ has complex conjugate roots.

Finally, we observe that only two of the four real-valued solutions satisfy an additional requirement $f(\alpha)f(\alpha+\pi)=c$, namely \eqref{f1} and \eqref{f4}. Moreover,
solution \eqref{f4} can be written as \eqref{f1} with $\tau=\frac{1}{2}+i\tau_0$, $\tau_0\in\bbR_{>0}$, see the next section.
\end{proof}

\section{Properties of real-valued solutions}
\subsection{Rectangular case, $\tau=i\tau_0$, $\tau_0>0$. }\quad

As mentioned in Theorem \ref{th 2}, one can find functions $g(\alpha,\beta)=f(\alpha-\beta)h(\alpha)h(\beta)$ solving equation \eqref{g+g+g} by setting  $ih(\alpha)=f(\alpha-\delta)$ with some fixed $\delta$. The function $f(\alpha)$ takes purely imaginary values along the horizontal lines $\Im(\alpha)=\pm \pi\tau_0/2$. Thus, we set $\delta=-i\pi\tau_0/2+\lambda_0$ with $\lambda_0\in\mathbb R$, and we get a one-parameter family of admissible real-valued functions 
\begin{equation}
h(\alpha)=h(\alpha;\lambda_0)=\frac{\vartheta_4((\alpha-\lambda_0)/2)}{\vartheta_3((\alpha-\lambda_0)/2)}.
\end{equation}
We mention the estimates, valid on the real axis:
\begin{equation}\label{ineq h}
0<\frac{\vartheta_4}{\vartheta_3}\le h(\alpha)\le \frac{\vartheta_3}{\vartheta_4}.
\end{equation}
Here $\vartheta_3=\vartheta_3(0)$ and $\vartheta_4=\vartheta_4(0)$ are theta-constants. We mention also the following corollary of \eqref{log der}:
\begin{equation}\label{eq g=g+g+g 1}
g(\alpha,\beta;\lambda_0)= f(\alpha-\beta)h(\alpha;\lambda_0)h(\beta;\lambda_0)=g_0(\alpha-\beta)+g_1(\beta-\lambda_0)-g_1(\alpha-\lambda_0),
\end{equation}
where
\begin{eqnarray}
g_0(\alpha)&= &-\frac{1}{\vartheta_3\vartheta_4}\cdot \frac{\vartheta'_2(\alpha/2)}{\vartheta_2(\alpha/2)},\\
g_1(\alpha) & = &-\frac{1}{\vartheta_3\vartheta_4}\cdot \frac{\vartheta'_3(\alpha/2)}{\vartheta_3(\alpha/2)}\ = \ g_0\Big(\alpha+\frac{\pi i\tau_0}{2}\Big).
\end{eqnarray}
The additive decomposition \eqref{eq g=g+g+g 1} has the following important consequence: upon summation over the rhombi adjacent to a given black point $x_0$, as in equation \eqref{Laplace eq}, the mass coefficient is independent of $\lambda_0$:
\begin{equation}\label{sum g = sum g_0}
\sum_{k=1}^m g(\alpha_k,\alpha_{k+1};\lambda_0)=\sum_{k=1}^m g_0(\alpha_k-\alpha_{k+1}).
\end{equation}
Below, the following inequality will be of importance.
\begin{lemma} \label{lemma 2}
In the rectangular case, for all $0<\alpha<\pi$, one has:
\begin{equation}\label{ineq g_0>f}
g_0(\alpha)>f(\alpha)>0\quad \Leftrightarrow \quad -\frac{\vartheta'_2(\alpha/2)}{\vartheta_2(\alpha/2)}>\vartheta_3\vartheta_4\frac{\vartheta_1(\alpha/2)}{\vartheta_2(\alpha/2)}.
\end{equation}
\end{lemma}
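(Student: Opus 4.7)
The plan is to split $G(u):=-\vartheta_2'(u)/\vartheta_2(u)-\vartheta_3\vartheta_4\vartheta_1(u)/\vartheta_2(u)$ (with $u=\alpha/2$) into two manifestly positive pieces using the Jacobi triple products. The equivalence stated in the lemma is just multiplication of $g_0(\alpha)>f(\alpha)$ by $\vartheta_3\vartheta_4\vartheta_2(u)>0$, and the positivity $f(\alpha)>0$ on $(0,\pi)$ is immediate from the triple products $\vartheta_1(u)=2q^{1/4}\sin u\prod(1-q^{2n})(1-2q^{2n}\cos 2u+q^{4n})$ and $\vartheta_2(u)=2q^{1/4}\cos u\prod(1-q^{2n})(1+2q^{2n}\cos 2u+q^{4n})$, every factor of which is strictly positive for $u\in(0,\pi/2)$ and $q\in(0,1)$.

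Taking the logarithmic derivative of the product for $\vartheta_2$ and the quotient of the products for $\vartheta_1$ and $\vartheta_2$, I would write
\[
G(u)=\tan u\bigl[1-\vartheta_3\vartheta_4\,P(u)\bigr]+4\sum_{n\ge 1}\frac{q^{2n}\sin 2u}{1+2q^{2n}\cos 2u+q^{4n}},\qquad P(u):=\prod_{n\ge 1}\frac{1-2q^{2n}\cos 2u+q^{4n}}{1+2q^{2n}\cos 2u+q^{4n}}.
\]
On $(0,\pi/2)$ the sum is trivially positive (each denominator equals $(1-q^{2n})^2+4q^{2n}\cos^2 u>0$ and $\sin 2u,q^{2n}>0$), so the whole problem reduces to showing $\vartheta_3\vartheta_4\,P(u)<1$ on $[0,\pi/2)$.

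A one-line $u$-derivative of each factor of $P(u)$ gives strict monotonicity of $P$ on $(0,\pi/2)$, with endpoint values $P(0)=\prod\bigl(\tfrac{1-q^{2n}}{1+q^{2n}}\bigr)^2$ and $P(\pi/2)=1/P(0)$. The key algebraic bridge I must build is the product identity
\[
\vartheta_3\vartheta_4=P(0),
\]
which, after substituting the Jacobi products $\vartheta_3=\prod(1-q^{2n})(1+q^{2n-1})^2$ and $\vartheta_4=\prod(1-q^{2n})(1-q^{2n-1})^2$, reduces to the Euler-type identity $\prod_{n\ge 1}(1-q^{4n-2})(1+q^{2n})=1$; this in turn is a short telescoping using $1+q^{2n}=(1-q^{4n})/(1-q^{2n})$, after which the even-exponent factors recombine to $\prod(1-q^{2m})/\prod(1-q^{2n})=1$. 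This is the one step I expect to require real care; everything else is sign-checking.

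Once $\vartheta_3\vartheta_4=P(0)$ is in hand, strict monotonicity of $P$ forces $\vartheta_3\vartheta_4\,P(u)=P(0)P(u)$ to increase from $P(0)^2=(\vartheta_3\vartheta_4)^2<1$ at $u=0$ to $P(0)P(\pi/2)=1$ at $u=\pi/2$ (the bound $\vartheta_3\vartheta_4<1$ being automatic since $P(0)$ is a product of terms in $(0,1)$). Hence $\vartheta_3\vartheta_4\,P(u)<1$ throughout $[0,\pi/2)$, both summands in the expression for $G(u)$ are strictly positive on $(0,\pi/2)$, and the lemma follows.
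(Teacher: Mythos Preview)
Your argument is correct, but it takes a genuinely different route from the paper. The paper works with the classical Lambert--Fourier expansions
\[
-\frac{\vartheta'_2(u)}{\vartheta_2(u)}=\tan u+4\sum_{n\ge 1}(-1)^{n-1}\frac{q^{2n}}{1-q^{2n}}\sin 2nu,\qquad
\vartheta_3\vartheta_4\,\frac{\vartheta_1(u)}{\vartheta_2(u)}=\tan u+4\sum_{n\ge 1}(-1)^{n}\frac{q^{2n}}{1+q^{2n}}\sin 2nu,
\]
subtracts, and identifies the result in closed form as $-\vartheta_3'(u;q^2)/\vartheta_3(u;q^2)$, whose positivity on $(0,\pi/2)$ is then immediate. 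Your argument instead stays entirely at the level of the Jacobi triple products: you split $G(u)$ as $\tan u\,[1-\vartheta_3\vartheta_4\,P(u)]$ plus a manifestly positive series, and then handle the bracket by the product identity $\vartheta_3\vartheta_4=P(0)$ (equivalent to the Euler telescoping $\prod(1-q^{4n-2})(1+q^{2n})=1$) together with the monotonicity of $P$ on $(0,\pi/2)$. The paper's proof is shorter once the Lambert series are taken as known and yields the bonus identity $g_0(\alpha)-f(\alpha)=-(\vartheta_3\vartheta_4)^{-1}\,\vartheta_3'(\alpha/2;q^2)/\vartheta_3(\alpha/2;q^2)$; your proof is more self-contained, needing only the product formulas and elementary calculus, at the price of proving the auxiliary identity $\vartheta_3\vartheta_4=P(0)$.
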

\begin{proof}
This follows from the well-known Fourier series expansions  \cite{WW}:
\begin{eqnarray*}
-\frac{\vartheta'_2(\alpha)}{\vartheta_2(\alpha)} & = & \tan (\alpha)+4\sum_{n=1}^\infty (-1)^{n-1}\frac{q^{2n}}{1-q^{2n}}\sin (2n\alpha),\\
\vartheta_3\vartheta_4\frac{\vartheta_1(\alpha)}{\vartheta_2(\alpha)} & = & \tan(\alpha)+4\sum_{n=1}^\infty(-1)^n\frac{q^{2n}}{1+q^{2n}}\sin(2n\alpha).
\end{eqnarray*}
There follows:
$$
-\frac{\vartheta'_2(\alpha)}{\vartheta_2(\alpha)}-\vartheta_3\vartheta_4\frac{\vartheta_1(\alpha)}{\vartheta_2(\alpha)}  =  8\sum_{n=1}^\infty (-1)^{n-1}\frac{q^{2n}}{1-q^{4n}}\sin (2n\alpha)=-\frac{\vartheta'_3(\alpha;q^2)}{\vartheta_3(\alpha;q^2)}>0
$$
for $0<\alpha<\pi/2$.
\end{proof}
Both inequalities \eqref{ineq h} and \eqref{ineq g_0>f} will be used in Section \ref{sect Laplace} to show, in two different ways, the positivity of weights for the energy functional for the Laplacian \eqref{Laplace eq} in the case of a rectangular complex torus.

\subsection{Rhombic case, $\tau=\frac{1}{2}+i\tau_0$, $\tau_0>0$.}\quad

As mentioned in Theorem \ref{th 2}, one can find functions $g(\alpha,\beta)=f(\alpha-\beta)h(\alpha)h(\beta)$ solving equation \eqref{g+g+g} by setting  $ih(\alpha)=f(\alpha-\delta)$ with some fixed $\delta$. In the present case, the function $if(\alpha)$ is real-valued (that is, $f(\alpha)$ takes pure imaginary values) on the vertical lines $\Re(\alpha) =\pi/2$ and $\Re(\alpha)=0$. Therefore, we cannot find a real-valued $h(\alpha)$ by restricting $f(\alpha)$ to some horizontal line. Hence, we are forced to consider here the case of imaginary-valued function $h(\alpha)$, coming from a real-valued $f(\alpha-\delta)$. This can be achieved by taking $\delta=\pi\tau+\lambda_0$ with $\lambda_0\in\mathbb R$, so that $\Im(\delta)=\pi\tau_0$. Thus, a one-parameter family of admissible imaginary-valued functions $h$ is found:
$$
h(\alpha)=h(\alpha;\lambda_0)=-if(\alpha-\lambda_0-\pi\tau)=if(\alpha-\lambda_0). 
$$
We have:
\begin{eqnarray}
g(\alpha,\beta;\lambda_0)=f(\alpha-\beta)h(\alpha;\lambda_0)h(\beta;\lambda_0) & = & -f(\alpha-\beta)f(\alpha-\lambda_0)f(\beta-\lambda_0)\nonumber\\
& = & f(\alpha-\beta)f(\beta-\lambda_0)f(\lambda_0-\alpha)\nonumber\\
& = & g_0(\alpha-\beta)+g_0(\beta-\lambda_0)-g_0(\alpha-\lambda_0). \label{eq g=g+g+g 2}
\end{eqnarray}
The latter equation has the same consequence \eqref{sum g = sum g_0} as in the rectangular case.

Computation in the rhombic case are facilitated by observing that for $\tau=\frac{1}{2}+i\tau_0$, $\tau_0>0$ one has
$$
q=\exp(\pi i\tau)=iq_0, \quad q_0=\exp(-\pi\tau_0)\in (0,1).
$$
There follows:
\begin{eqnarray}
\frac{\vartheta_1(\alpha/2,q)}{\vartheta_2(\alpha/2,q)} & = & \tan (\alpha/2)\prod_{n=1}^\infty \frac{(1-q^{2n}e^{i\alpha})(1-q^{2n}e^{-i\alpha})}
                                                                                                                                                    {(1+q^{2n}e^{i\alpha})(1+q^{2n}e^{-i\alpha})} \nonumber\\
& = & \tan (\alpha/2)\prod_{n=1}^\infty \frac{(1-(-1)^nq_0^{2n}e^{i\alpha})(1-(-1)^nq_0^{2n}e^{-i\alpha})}
                                                                    {(1+(-1)^nq_0^{2n}e^{i\alpha})(1+(-1)^nq_0^{2n}e^{-i\alpha})}  \nonumber\\
& = & \tan (\alpha/2)\prod_{n=1}^\infty \frac{(1-q_0^{4n}e^{i\alpha})(1-q_0^{4n}e^{-i\alpha})(1+q_0^{4n-2}e^{i\alpha})(1+q_0^{4n-2}e^{-i\alpha})}
                                                       {(1+q_0^{4n}e^{i\alpha})(1+q_0^{4n}e^{-i\alpha})(1-q_0^{4n-2}e^{i\alpha})(1-q_0^{4n-2}e^{-i\alpha})}  \nonumber \\
& = & \frac{\vartheta_1(\alpha/2,q_0^2)\vartheta_3(\alpha/2,q_0^2)}{\vartheta_2(\alpha/2,q_0^2)\vartheta_4(\alpha/2,q_0^2)}.
\end{eqnarray}
According to \eqref{log der}, function $f(\alpha)$ satisfies equation \eqref{g+g+g dif} with
$$
g_0(\alpha)=-\frac{1}{\vartheta_3(0,q)\vartheta_4(0,q)}\cdot\frac{\vartheta'_2(\alpha/2,q)}{\vartheta_2(\alpha/2,q)}.
$$
\begin{lemma}\label{lemma 3}
In the rhombic case, for all $0<\alpha<\pi$, one has:
\begin{equation}
f(\alpha)>g_0(\alpha)>0 \quad \Leftrightarrow\quad \frac{\vartheta_1(\alpha/2)}{\vartheta_2(\alpha/2)}>-\frac{1}{\vartheta_3\vartheta_4}\cdot\frac{\vartheta'_2(\alpha/2)}{\vartheta_2(\alpha/2)}.
\end{equation}
\end{lemma}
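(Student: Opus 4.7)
The plan is to follow the template of Lemma~2: recast $\vartheta_3\vartheta_4(f-g_0)$ as twice the logarithmic derivative of a single Jacobi theta function evaluated at the real nome $q_0^2$, where its sign is immediate from an explicit product form. The twist compared with the rectangular case is that the rhombic nome $q=iq_0$ is purely imaginary, so the relevant Fourier identities in $q$ must be specialised via $q^{2n}=(-1)^n q_0^{2n}$.

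The displayed inequality is $\vartheta_3\vartheta_4(f-g_0)>0$. By the definitions of $f$ and $g_0$ it reads
\[
\vartheta_3\vartheta_4(f-g_0) = \vartheta_3\vartheta_4\,\frac{\vartheta_1(\alpha/2,q)}{\vartheta_2(\alpha/2,q)} + \frac{\vartheta_2'(\alpha/2,q)}{\vartheta_2(\alpha/2,q)}.
\]
Substituting the two Fourier series quoted in the proof of Lemma~2 (valid as analytic identities in $q$), with $\beta=\alpha/2$, the $\tan(\alpha/2)$ terms cancel. Using the elementary identity $(1+q^{2n})^{-1}+(1-q^{2n})^{-1}=2(1-q^{4n})^{-1}$, one obtains
\[
\vartheta_3\vartheta_4(f-g_0) = 8\sum_{n\ge 1}\frac{(-1)^n\,q^{2n}}{1-q^{4n}}\,\sin(n\alpha).
\]
Specialising $q=iq_0$ yields $(-1)^n q^{2n}=q_0^{2n}$ and $q^{4n}=q_0^{4n}$, so the alternating factor disappears and
\[
\vartheta_3\vartheta_4(f-g_0) = 8\sum_{n\ge 1}\frac{q_0^{2n}}{1-q_0^{4n}}\,\sin(n\alpha) = 2\,\frac{\vartheta_4'(\alpha/2,\,q_0^2)}{\vartheta_4(\alpha/2,\,q_0^2)},
\]
the last equality being the standard expansion $\vartheta_4'(\beta,Q)/\vartheta_4(\beta,Q)=4\sum_{k\ge 1}Q^k(1-Q^{2k})^{-1}\sin(2k\beta)$ evaluated at $Q=q_0^2$, $\beta=\alpha/2$ (itself obtained by logarithmic differentiation of the Jacobi product for $\vartheta_4$).

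To conclude, for real $Q=q_0^2\in(0,1)$ the Jacobi product rearranges into
\[
\vartheta_4(\beta, Q) = \prod_{n\ge 1}(1-Q^{2n})\prod_{n\ge 1}\bigl[(1-Q^{2n-1})^2 + 4Q^{2n-1}\sin^2\beta\bigr],
\]
which is a positive function of $\beta$ and strictly increasing in $\sin^2\beta$ on $[0,\pi/2]$. Hence $\vartheta_4(\alpha/2,q_0^2)>0$ and $\vartheta_4'(\alpha/2,q_0^2)>0$ throughout $\alpha\in(0,\pi)$, so the right-hand side of the previous display is strictly positive and the displayed inequality $f(\alpha)>g_0(\alpha)$ is established. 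The remaining positivity $g_0>0$, which together with $f>g_0$ gives the full chain $f>g_0>0$, is handled by a parallel Fourier manipulation (analysing $\vartheta_2'(\alpha/2,q)/\vartheta_2(\alpha/2,q)$ at $q=iq_0$).

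The main subtlety is the sign bookkeeping under $q\mapsto iq_0$: the alternating factors $(-1)^{n-1}$ in the rectangular-case Fourier expansions combine with the parity of $q^{2n}=(-1)^n q_0^{2n}$ to collapse the series into one of uniform sign. This is precisely the mechanism by which the inequality $g_0>f$ of Lemma~2 reverses to $f>g_0$ here, the log-derivative $-\vartheta_3'(\cdot,q^2)/\vartheta_3(\cdot,q^2)$ appearing there being replaced by $+\vartheta_4'(\cdot,q_0^2)/\vartheta_4(\cdot,q_0^2)$.
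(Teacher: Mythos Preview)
Your argument is correct and reaches exactly the same key identity as the paper,
\[
\vartheta_3\vartheta_4\bigl(f(\alpha)-g_0(\alpha)\bigr)=8\sum_{n\ge 1}\frac{q_0^{2n}}{1-q_0^{4n}}\sin(n\alpha)=2\,\frac{\vartheta_4'(\alpha/2,q_0^2)}{\vartheta_4(\alpha/2,q_0^2)},
\]
and then concludes positivity. The route differs slightly in how the $f$-term is expanded: the paper first converts $f(\alpha)=\vartheta_1(\alpha/2,q)/\vartheta_2(\alpha/2,q)$ to the $q_0^2$-form $\vartheta_1\vartheta_3/(\vartheta_2\vartheta_4)$ via product manipulations, establishes $\vartheta_3(0,q)\vartheta_4(0,q)=\vartheta_3^2(0,q_0^2)$ separately, and then invokes a Fourier expansion at the real nome $q_0^2$; you instead treat the Lemma~2 expansion $\vartheta_3\vartheta_4\,\vartheta_1/\vartheta_2=\tan+\cdots$ as an analytic identity in $q$ and specialise $q=iq_0$ directly. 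Your path is a bit more economical, since it bypasses the product-formula detour; the paper's path has the minor advantage of making the positivity of the prefactor $\vartheta_3\vartheta_4=\vartheta_3^2(0,q_0^2)>0$ explicit, which you use implicitly when passing from $\vartheta_3\vartheta_4(f-g_0)>0$ to $f>g_0$. Your product-form justification of $\vartheta_4'(\alpha/2,q_0^2)>0$ is more detailed than the paper's one-line assertion.
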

\begin{proof}
\begin{eqnarray}
\vartheta_3(0,q)\vartheta_4(0,q) & = & \prod_{n=1}^\infty(1-q^{2n})(1+q^{2n-1})^2\prod_{n=1}^\infty(1-q^{2n})(1-q^{2n-1})^2 \nonumber\\
& = & \prod_{n=1}^\infty(1-q^{2n})^2(1-q^{4n-2})^2 \nonumber\\
& = & \prod_{n=1}^\infty(1-(-1)^nq_0^{2n})^2(1+q_0^{4n-2})^2\nonumber\\
& = & \prod_{n=1}^\infty(1-q_0^{4n})^2(1+q_0^{4n-2})^2(1+q_0^{4n-2})^2\nonumber\\
& = & \prod_{n=1}^\infty(1-q_0^{4n})^2(1+q_0^{4n-2})^4\nonumber\\
& = & \vartheta_3^2(0,q_0^2).
\end{eqnarray}
We have the following well-known Fourier expansions \cite{WW}:
\begin{eqnarray}
-\frac{\vartheta'_2(\alpha,q)}{\vartheta_2(\alpha,q)} & = & \tan \alpha-4\sum_{n=1}^\infty\frac{(-1)^nq^{2n}}{1-q^{2n}}\sin(2n\alpha) \nonumber\\
& = &  \tan \alpha-4\sum_{n=1}^\infty\frac{q_0^{2n}}{1-(-1)^nq_0^{2n}}\sin(2n\alpha),
\end{eqnarray}
and
\begin{eqnarray}
\vartheta_3^2(0,q_0^2)\frac{\vartheta_1(\alpha,q_0^2)\vartheta_3(\alpha,q_0^2)}{\vartheta_2(\alpha,q_0^2)\vartheta_4(\alpha,q_0^2)} 
& = & \tan \alpha+4\sum_{n=1}^\infty\frac{q_0^{2n}}{1+(-1)^nq_0^{2n}}\sin(2n\alpha).
\end{eqnarray}
Thus, we have:
\begin{eqnarray*}
\vartheta_3^2(0,q_0^2)\big(f(\alpha)-g_0(\alpha)\big) & = & 4\sum_{n=1}^\infty\frac{q_0^{2n}}{1+(-1)^nq_0^{2n}}\sin(n\alpha)+4\sum_{n=1}^\infty\frac{q_0^{2n}}{1-(-1)^nq_0^{2n}}\sin(n\alpha)\\
 & = & 8\sum_{n=1}^\infty \frac{q_0^{2n}}{1-q_0^{4n}}\sin(n\alpha) =2\frac{\vartheta'_4(\alpha/2,q_0^2)}{\vartheta_4(\alpha/2,q_0^2)},
\end{eqnarray*}
and this is positive for $0<\alpha<\pi$.
\end{proof}

\section{Discrete exponential function}
\label{sect exp}

It is well known \cite{BS_quad, BS_DDG_book} that the 3D consistency of a quad-equation implies the existence of \emph{B\"acklund transformations}. More precisely, consider a multi-dimensionally consistent quad-equation on a quad-graph $G$. Then for any solution $x:V(G)\to\bbC$, there is a two-parameter family of solutions $x^+:V(G)\to\bbC$ called \emph{B\"acklund transforms} of $x$ and constructed as follows. One formally extends the planar quad-graph $G$ into the third dimension, by considering the second copy $G^+$ of $G$ and adding edges connecting each vertex $v\in V(G)$ with its copy $v^+\in V(G^+)$. On this way we obtain a two-layer ``3D quad-graph'' ${\bf G}$. Elementary building blocks of $\bf G$ are combinatorial cubes $(v,v_1,v_{12},v_2,v^+,v_1^+,v_{12}^+,v_2^+)$. One requires that the quad-equation be satisfied on all faces of $\bf G$, which is possible due to the 3D consistency. Thus, a B\"acklund transformation is obtained effectively by imposing the quad-equation on all ``vertical'' quads $(v,v_i,v_i^+,v^+)$ for all edges $(v,v_i)$ of $G$. It is determined by its value at one vertex of $G$ and by the parameter $\lambda$ assigned to vertical edges. 

In the present context of labels assigned to directed edges of a bipartite quad-graph $G$, the combinatorics of an elementary building block of the B\"acklund transformation is shown on Figure \ref{Fig Backlund cube}.

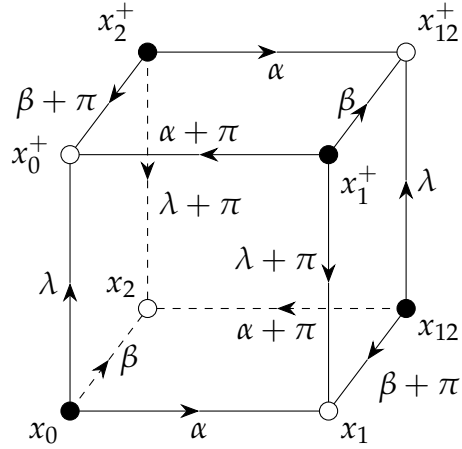
\begin{figure}[htbp]
\centering
\begin{tikzpicture}[scale=1.7]

\draw[-{Stealth[scale=1.5]}] (0,0) -- (1,0) node[below] {$\alpha$};
\draw (1,0) -- (2,0); 
\draw[-{Stealth[scale=1.5]}] (0,0) -- (0,1) node[left] {$\lambda$};
\draw (0,1) -- (0,2); 
\draw[-{Stealth[scale=1.5]}] (2,2) -- (1,2) node[above] {$\alpha + \pi$};
\draw (1,2) -- (0,2); 
\draw[-{Stealth[scale=1.5]}] (2,2) -- (2,1) node[above left] {$\lambda + \pi$};
\draw (2,1) -- (2,0); 

\draw[dashed, -{Stealth[scale=1.5]}] (0,0) -- (.3,.4) node[right] {$\beta$};
\draw[dashed] (.3,.4) -- (.6,.8); 
\draw[-{Stealth[scale=1.5]}] (2.6,.8) -- (2.3,.4) node[below right] {$\beta + \pi$};
\draw (2.3,.4) -- (2,0); 
\draw[-{Stealth[scale=1.5]}] (.6,2.8) -- (.3,2.4) node[left] {$\beta + \pi$};
\draw (.3,2.4) -- (0,2); 
\draw[-{Stealth[scale=1.5]}] (2,2) -- (2.3,2.4) node[left] {$\beta$};
\draw (2.3,2.4) -- (2.6,2.8); 

\begin{scope}[xshift=6mm,yshift=8mm]
	\draw[dashed, -{Stealth[scale=1.5]}] (2,0) -- (1,0) node[below] {$\alpha + \pi$};
	\draw[dashed] (1,0) -- (0,0); 
	\draw[dashed, -{Stealth[scale=1.5]}] (0,2) -- (0,1) node[below right] {$\lambda + \pi $};
	\draw[dashed] (0,1) -- (0,0); 
	\draw[-{Stealth[scale=1.5]}] (0,2) -- (1,2) node[below] {$\alpha$};
	\draw (1,2) -- (2,2); 
	\draw[-{Stealth[scale=1.5]}] (2,0) -- (2,1) node[right] {$\lambda$};
	\draw (2,1) -- (2,2); 
	
	\draw[fill=white] (0,0) circle(2pt) node[above left] {$x_2$};
	\draw[fill=white] (2,2) circle(2pt) node[above right] {$x_{12}^+$};
	\draw[fill=black] (2,0) circle(2pt) node[below right] {$x_{12}$};
	\draw[fill=black] (0,2) circle(2pt) node[above left] {$x_{2}^+$};
\end{scope}

\draw[fill=black] (0,0) circle(2pt) node[below left] {$x_0$};
\draw[fill=black] (2,2) circle(2pt) node[below right] {$x_{1}^+$};
\draw[fill=white] (2,0) circle(2pt) node[below right] {$x_1$};
\draw[fill=white] (0,2) circle(2pt) node[left] {$x_0^+\;$};

\end{tikzpicture}
\caption{An elementary cube of a B\"acklund transformation with the B\"acklund parameter $\lambda$}
\label{Fig Backlund cube}
\end{figure}


\begin{definition} \label{def exp}
A \emph{discrete exponential function} for a given 3D consistent linear quad-equation is the B\"acklund transform of a zero solution.
\end{definition}
The discrete exponential function can be recursively obtained by extending it over edges according to the basic quad-equation. 

For edges $(v_0,v_1)$, $(v_0,v_2)$ pointing from a black vertex to a white one, we have to apply the quad-equation like
$$
f(\alpha,\lambda)x_1^+-g(\alpha,\lambda)x_0=i\Big(h(\lambda)x_0^+-h(\alpha)x_1\Big),
$$
with the background solution $x_0=x_1=0$. Thus, we have:
\begin{equation}
x_1^+=i\frac{h(\lambda)}{f(\alpha,\lambda)} x_0^+, \quad x_2^+=i\frac{h(\lambda)}{f(\beta,\lambda)} x_0^+.
\end{equation}
For edges $(v_1,v_{12})$, $(v_2,v_{12})$ pointing from a white vertex to a black one, the suitable quad-equation is as follows (look up the corresponding quads on Figure \ref{Fig Backlund cube}):
$$
f(\beta+\pi,\lambda+\pi)x_{12}^+-g(\beta+\pi,\lambda+\pi)x_1=i\Big(h(\lambda+\pi)x_1^+-h(\beta+\pi)x_{12}\Big),
$$
with the background solution $x_1=x_{12}=0$. Taking into account that $f(\beta+\pi,\lambda+\pi)=f(\beta,\lambda)$, we find:
\begin{align}
x_{12}^+ & =i\ \frac{h(\lambda+\pi)}{f(\beta,\lambda)} x_1^+=i\ \frac{h(\lambda+\pi)}{f(\alpha,\lambda)} x_2^+\nonumber\\
               & =-\frac{h(\lambda)h(\lambda+\pi)}{f(\alpha,\lambda)f(\beta,\lambda)} x_0^+=-\frac{c}{f(\alpha,\lambda)f(\beta,\lambda)} .
\end{align}
We come to the following result.

\begin{theorem}
Let 
$$
f(\alpha,\beta)=f(\alpha-\beta)=\frac{\vartheta_1((\alpha-\beta)/2)}{\vartheta_2((\alpha-\beta)/2)}, \quad h(\lambda)=\frac{\vartheta_4((\lambda-\lambda_0)/2)}{\vartheta_3((\lambda-\lambda_0)/2)},
$$
so that $c=1$. The discrete exponential function $\textbf{e}: V(G)\to\bbC$ normalized at some black vertex $v_0\in B$ by $\textbf{e}(v_0)=1$, is given by the following formula. For any $v\in V(G)$, consider a path $(v_0,v_1), (v_1,v_2), \ldots , (v_{m-1},v_m)$ of directed rhombus edges connecting $v_0$ with $v=v_m$, with $v_k-v_{k-1}=\exp(i\alpha_k)$. Then
$$
\textbf{e}(v)=i^m\left\{\begin{array}{cc} 1, & v\in B \\ h(\lambda), & v\in W \end{array}\right\} \cdot \prod_{k=1}^m \frac{1}{f(\alpha_k-\lambda)}.
$$ 
\end{theorem}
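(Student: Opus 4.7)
The plan is to construct $\mathbf{e}$ inductively along a directed rhombus path, show that each step of the path matches the Bäcklund recursion formulas already derived just before the theorem statement, and then argue that the resulting product depends only on the endpoint $v$, not on the path. The global existence and well-definedness of such a Bäcklund transform is built into the 3D-consistency established in Theorem \ref{th 1}, so what remains is essentially a bookkeeping verification that the explicit product formula reproduces the recursion step-by-step.

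First I would check the normalization $\mathbf{e}(v_0)=1$ at the distinguished black vertex $v_0$, for which the empty path ($m=0$) gives $i^0\cdot 1\cdot 1=1$. Next, for a single step $v_{k-1}\to v_k$ along a rhombus edge with $v_k-v_{k-1}=\exp(i\alpha_k)$, I would compute the ratio $\mathbf{e}(v_k)/\mathbf{e}(v_{k-1})$ predicted by the ansatz and compare with the two Bäcklund recursions displayed in the text: (i) for a black-to-white edge, the ansatz yields $ih(\lambda)/f(\alpha_k-\lambda)$, which is precisely the first recursion; (ii) for a white-to-black edge, the ansatz yields $i/\bigl(h(\lambda)f(\alpha_k-\lambda)\bigr)$, which after using $h(\lambda)h(\lambda+\pi)=c=1$ becomes $ih(\lambda+\pi)/f(\alpha_k-\lambda)$, matching the second recursion. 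The alternation of the color factor $1, h(\lambda), 1, h(\lambda),\ldots$ along a bipartite path thus accounts exactly for the shift $\lambda\leftrightarrow\lambda+\pi$ between the two recursions.

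The main obstacle is confirming path-independence, i.e., that the product over a path from $v_0$ to $v$ depends only on $v$. Abstractly this follows from the 3D-consistency that guarantees the Bäcklund transform is well-defined on any simply-connected piece of $G$, but I would also verify it directly on a single rhombic face as a sanity check: the closed cycle $v_0\to v_1\to v_{12}\to v_2\to v_0$ has edge arguments $\alpha,\beta,\alpha+\pi,\beta+\pi$, the four factors of $i$ combine to $i^4=1$, the alternating $h(\lambda)^{\pm 1}$ factors cancel, and the remaining product $\bigl[f(\alpha-\lambda)f(\alpha+\pi-\lambda)f(\beta-\lambda)f(\beta+\pi-\lambda)\bigr]^{-1}$ equals $1$ by the symmetry $f(u)f(u+\pi)=-c=-1$ from \eqref{ff}, applied to $u=\alpha-\lambda$ and $u=\beta-\lambda$. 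The delicate part is tracking the convention for edge orientation against path direction: at a white-to-black traversal the intrinsic label carried by the directed edge differs from the argument $\alpha_k$ of the actual displacement by $\pi$, and the factor $h(\lambda+\pi)$ appearing in the white-to-black Bäcklund formula is precisely the trace of this discrepancy. Once this identification is made, all signs and $i$'s line up and the formula follows.
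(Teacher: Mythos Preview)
Your proposal is correct and follows essentially the same approach as the paper: the paper derives the two edge-wise B\"acklund recursions (black-to-white and white-to-black) immediately before the theorem and then states the product formula as the result, leaving path-independence to the already established 3D-consistency. Your treatment fills in the bookkeeping more explicitly---in particular the identification $h(\lambda+\pi)=1/h(\lambda)$ when $c=1$ and the direct single-face monodromy check using $f(u)f(u+\pi)=-1$---but these are exactly the verifications the paper's argument implicitly rests on.
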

We see that $\emph{\textbf{e}}(v)$ is a discrete holomorphic function, real at black vertices and purely imaginary at white vertices. Its restriction to black vertices is a discrete harmonic function and was introduced in \cite{BTR}. Like in the case of discrete complex analysis, the discrete exponential functions can be shown to form a basis of discrete holomorphic functions and can be used to give an integral representation of the Green function, cf. \cite{BMS}, \cite{Kenyon}.


\section{Laplace equation and Dirichlet energy}
\label{sect Laplace}

Ansatz \eqref{quad eq} ensures that, upon summation over quads adjacent to a given (black) vertex $x_0$ (with the labels $\alpha_k$, $\alpha_{k+1}$, say) one ends up with a Laplace type equation
\begin{equation}\label{Laplace eq again}
\sum_{k=1}^m f(\alpha_k,\alpha_{k+1})x_{k,k+1}-\Big(\sum_{k=1}^m g(\alpha_k,\alpha_{k+1})\Big)x_0=0.
\end{equation}
We recall that due to \eqref{sum g = sum g_0} this can be equivalently put as 
\begin{equation}\label{Laplace eq g0}
\sum_{k=1}^m f(\alpha_k,\alpha_{k+1})x_{k,k+1}-\Big(\sum_{k=1}^m g_0(\alpha_k-\alpha_{k+1})\Big)x_0=0.
\end{equation}
These two forms of the Laplace type equation suggest two different expressions for the  Dirichlet energy functional, namely
\begin{equation}\label{Dirichlet gg}
S=\sum_{{\rm quads}\;(x_0,x_1,x_{12},x_2)} \left(\frac{1}{2}g(\alpha,\beta)x_0^2+\frac{1}{2}g(\alpha+\pi,\beta+\pi)x_{12}^2-f(\alpha,\beta)x_0x_{12}\right)
\end{equation}
and 
\begin{equation}\label{Dirichlet g0}
S=\sum_{{\rm quads}\;(x_0,x_1,x_{12},x_2)} \left(\frac{1}{2}g_0(\alpha-\beta)(x_0^2+x_{12}^2)-f(\alpha,\beta)x_0x_{12}\right),
\end{equation}
respectively. In the latter two formulas, we assume that all quads $(x_0,x_1,x_{12},x_2)$ are positively oriented. The requirements for these functionals is the symmetry of the contribution of the quad $(x_0,x_1,x_{12},x_2)$ with respect to the interchanging the roles of $x_0$ and $x_{12}$. We recall that $f(\alpha,\beta)=f(\alpha+\pi,\beta+\pi)=f(\alpha-\beta)$.
\begin{theorem}\label{th 5}
In the rectangular case, each term in each of the Dirichlet functionals \eqref{Dirichlet gg} and \eqref{Dirichlet g0} is a positive definite quadratic form.
\end{theorem}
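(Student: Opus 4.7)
The plan is to examine each summand in \eqref{Dirichlet gg} and \eqref{Dirichlet g0} as a $2\times 2$ symmetric quadratic form in the two black fields $(x_0,x_{12})$ of a single quad, and verify positive definiteness via Sylvester's criterion (positive diagonal entries and positive determinant). The central analytic input is Lemma~\ref{lemma 2}, which in the rectangular regime supplies $g_0(\alpha-\beta)>f(\alpha-\beta)>0$ for $\alpha-\beta\in(0,\pi)$ (the range pinned down by the positive orientation convention on the quad).

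For the summand of \eqref{Dirichlet g0}, the associated matrix is
$$
\tfrac{1}{2}\begin{pmatrix}g_0(\alpha-\beta)&-f(\alpha-\beta)\\-f(\alpha-\beta)&g_0(\alpha-\beta)\end{pmatrix}.
$$
This matrix is circulant, so its eigenvalues are $\tfrac{1}{2}(g_0-f)(\alpha-\beta)$ and $\tfrac{1}{2}(g_0+f)(\alpha-\beta)$, with eigenvectors $(1,1)$ and $(1,-1)$. Both are strictly positive by Lemma~\ref{lemma 2}, so positive definiteness is immediate.

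For the summand of \eqref{Dirichlet gg}, I would first reduce the two diagonal entries using \eqref{g thru fh}, the periodicity $f(\alpha+\pi,\beta+\pi)=f(\alpha-\beta)$ from \eqref{f per}, and the relation $h(\alpha)h(\alpha+\pi)=c$ with $c=1$ (which holds for the rectangular $h$ after the computation $\vartheta_3(u+\pi/2)/\vartheta_4(u+\pi/2)=\vartheta_4(u)/\vartheta_3(u)$). This gives
$$
g(\alpha,\beta)=f(\alpha-\beta)\,h(\alpha)h(\beta),\qquad g(\alpha+\pi,\beta+\pi)=\frac{f(\alpha-\beta)}{h(\alpha)h(\beta)}.
$$
Positivity of each diagonal then follows from $f(\alpha-\beta)>0$ and the bounds \eqref{ineq h} ensuring $h>0$. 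For the determinant condition, I would invoke the additive decomposition \eqref{eq g=g+g+g 1}, which re-expresses $g(\alpha,\beta)$ as $g_0(\alpha-\beta)+dg_1$; the coboundary piece does not affect the quadratic form's spectral properties on each quad only after pairing with the twin entry $g(\alpha+\pi,\beta+\pi)$, so that the determinant $g(\alpha,\beta)g(\alpha+\pi,\beta+\pi)-f(\alpha-\beta)^2$ inherits its strict positivity from the analogous (strictly positive) determinant in \eqref{Dirichlet g0}.

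I expect the main subtle point to be this last step. The direct algebraic reduction of the determinant via $h(\alpha)h(\alpha+\pi)=1$ collapses $g(\alpha,\beta)g(\alpha+\pi,\beta+\pi)$ to $f(\alpha-\beta)^2$ on the nose, producing only a perfect-square identity; the strict separation therefore must be extracted from the coboundary term $g_1(\beta-\lambda_0)-g_1(\alpha-\lambda_0)$ in \eqref{eq g=g+g+g 1}, using the monotonicity content of Lemma~\ref{lemma 2} together with the bounds \eqref{ineq h}. Once this gauge-theoretic manipulation is carried through, the rest of the argument is routine bookkeeping based on the identities of Lemma~1 and Section~4.1.
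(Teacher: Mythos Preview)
Your treatment of \eqref{Dirichlet g0} is correct and essentially identical to the paper's: you compute the eigenvalues $\tfrac12(g_0\pm f)$ of the circulant matrix, the paper checks the determinant $g_0^2-f^2>0$; both reduce to Lemma~\ref{lemma 2}.

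For \eqref{Dirichlet gg}, however, there is a genuine gap. Your own computation is right: with $c=1$ one has $g(\alpha,\beta)g(\alpha+\pi,\beta+\pi)=f(\alpha-\beta)^2$ \emph{exactly}, so the $2\times 2$ matrix has determinant zero and the quadratic form is rank one. Nothing can change this. The additive decomposition \eqref{eq g=g+g+g 1} rewrites $g(\alpha,\beta)$ as $g_0(\alpha-\beta)$ plus a coboundary, but the twin entry $g(\alpha+\pi,\beta+\pi)$ is \emph{not} $g_0(\alpha-\beta)$ plus the same coboundary; the product is fixed at $f^2$ by the algebraic identity $h(\alpha)h(\alpha+\pi)=1$, and no ``gauge-theoretic manipulation'' can produce a strictly positive determinant out of an identically vanishing one. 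Your proposed last step therefore cannot work.

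The paper does not attempt to prove strict definiteness for \eqref{Dirichlet gg}. It simply rewrites each summand, via \eqref{g thru fh} and \eqref{h sym}, as
\[
\tfrac12 f(\alpha,\beta)\Big(\sqrt{h(\alpha)h(\beta)}\,x_0-\tfrac{1}{\sqrt{h(\alpha)h(\beta)}}\,x_{12}\Big)^2,
\]
a complete square with positive prefactor (since $f(\alpha-\beta)>0$ and $h>0$ by \eqref{ineq h}). So each term is positive semidefinite; the word ``positive definite'' in the theorem statement is used loosely. You should drop the attempt to squeeze out strict positivity and instead record the complete-square structure directly.
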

\begin{proof}
As for functional \eqref{Dirichlet gg}, we observe that due to \eqref{g thru fh}, \eqref{h sym}, it can be put as 
\begin{eqnarray} \label{eq 2-form complete square}
S &= & \sum_{{\rm quads}\;(x_0,x_1,x_{12},x_2)} \left(\frac{1}{2c}f(\alpha,\beta)h(\alpha)h(\beta)x_0^2+\frac{c}{2}\frac{f(\alpha,\beta)}{h(\alpha)h(\beta)}x_{12}^2-f(\alpha,\beta)x_0x_{12}\right) \nonumber \\
 & = & \sum_{{\rm quads}\;(x_0,x_1,x_{12},x_2)} \frac{1}{2}f(\alpha,\beta)\left(\frac{h(\alpha)h(\beta)}{c}x_0^2+\frac{c}{h(\alpha)h(\beta)}x_{12}^2-2x_0x_{12}\right).
\end{eqnarray} 
Each term here is a complete square, with the sign of $f(\alpha-\beta)h(\alpha)h(\beta)>0$, see equation \eqref{ineq h}.

For functional \eqref{Dirichlet g0}, positivity follows from $g_0^2(\alpha-\beta)>f^2(\alpha-\beta)$, see Lemma \ref{lemma 2}.
\end{proof}

Laplacian \eqref{Laplace eq g0}, Dirichlet energy \eqref{Dirichlet g0}, as well as the positivity statement for this functional were found in \cite{BTR}. In the rhombic case, neither of Dirichlet energies \eqref{Dirichlet gg}, \eqref{Dirichlet g0} consists of positive definite quad contributions: the signs of $h(\alpha)h(\beta)$ can be arbitrary, and the contributions in \eqref{Dirichlet g0} are indefinite forms, due to Lemma \ref{lemma 3}.

\section{Discrete pluri-Lagrangian problem}
\label{sect pluri}

It turns out that the Dirichlet energies of Section \ref{sect Laplace} are very special cases of certain integrable variational discrete systems. Recall that a suitable notion of integrability of variational systems is that of a pluri-Lagrangian structure \cite{BPS2, BS_pluriharmonic}. For discrete systems, this structure can be seen as an analogue of multi-dimensional consistency in the variational context. We give here a very brief introduction to pluri-Lagrangian systems, following \cite{BS_pluriharmonic}. In considering multi-dimensional pluri-Lagrangian systems, we use the following notations. Independent discrete
variables on the lattice $\bbZ^m$ are denoted $n=(n_1,n_2,\dots,n_m)$. The lattice shifts are denoted by
$T_i:n\to n+ e_i$, where $e_i$ is the $i$-th coordinate vector.
\begin{itemize}
\item
Single superscript $i$ means that an object is associated with the edge $(n,n+e_i)$
and double superscript $ij$ is used for objects associated with the plaquette
$\sigma^{ij}=(n,n+e_i,n+e_i+e_j,n+e_j)$.
\item
Subscripts  denote shifts on the lattice. For instance, the fields at the corners of the plaquette
$\sigma^{ij}$ are denoted by $x, x_i, x_{ij}, x_j$.
\end{itemize}

In the most general form, consistent variational equations can be described as
follows.

\begin{definition}\label{def:pluriLagr problem} {\bf (2D pluri-Lagrangian problem)}

\begin{itemize}
\item Let $\cL$ be a discrete 2-form, i.e., a real-valued function of oriented elementary squares
\[
\sigma^{ij}=\left(n,n+e_{i},n+e_{i}+e_{j},n+e_{j}\right)
\]
of $\bbZ^m$, such that $\cL\left(\sigma^{ij}\right)=-\cL\left(\sigma^{ji}\right)$. We will assume that $\cL$ depends on some field assigned to the points of $\bbZ^m$, that is, on some $x:\bbZ^m\to \bbC$. More precisely, $\cL(\sigma^{ij})$ depends on the values of $x$ at the four vertices of $\sigma^{ij}$:
\begin{equation}\label{L-v}
 \cL(\sigma^{ij})=L(x,x_i,x_j,x_{ij}).
\end{equation}

\item To an arbitrary oriented quad-surface $\Sigma$ in $\bbZ^m$, there corresponds the {\em action functional}, which assigns to $x |_{V(\Sigma)}$, i.e., to the fields at the vertices of the surface $\Sigma$, the number
\begin{equation}\label{action on surface}
S_\Sigma=\sum_{\sigma^{ij}\in\Sigma}\cL(\sigma^{ij}).
\end{equation}
\item We say that the field $x:V(\Sigma)\to \bbC$ is a critical point of $S_\Sigma$, if at any interior point $n\in V(\Sigma)$, we have
\begin{equation}\label{eq: dEL gen}
    \frac{\partial S_\Sigma}{\partial x(n)}=0.
\end{equation}
Equations (\ref{eq: dEL gen}) are called {\em discrete Euler-Lagrange equations} for the action $S_\Sigma$ (or just for the surface $\Sigma$, if it is clear which 2-form $\cL$ we are speaking about).
\item We say that the field $x:\bbZ^m\to\bbC$ solves the {\em pluri-Lagrangian problem} for the Lagrangian 2-form $\cL$ if, {\em for any quad-surface $\Sigma$ in $\bbZ^m$}, the restriction $x |_{V(\Sigma)}$ is a critical point of the corresponding action $S_\Sigma$.
\end{itemize}
\end{definition}

\begin{definition}\label{def:pluriLagr system} {\bf (System of corner equations)}
A 3D-corner is a quad-surface consisting of three elementary squares adjacent to a vertex of valence 3.
The {\em system of corner equations} for a given discrete 2-form $\cL$ consists of discrete Euler-Lagrange equations at the 3-valent interior points of all possible 3D-corners in $\bbZ^m$. If the action for the surface of an oriented elementary cube $\sigma^{ijk}$ of the coordinate directions $i,j,k$ (which can be identified with the discrete exterior derivative $d\cL$ evaluated at $\sigma^{ijk}$) is denoted by ($T_k$ is the shift in the $k$-th coordinate direction)
\begin{equation}\label{eq: Sijk}
S^{ijk}=d\cL(\sigma^{ijk})=(T_k-I)\cL(\sigma^{ij})+(T_i-I)\cL(\sigma^{jk})+(T_j-I)\cL(\sigma^{ki}),
\end{equation}
then the system of corner equations consists of the eight equations
\begin{equation}\label{eq: corner eqs}
\begin{array}{llll}
\dfrac{\partial S^{ijk}}{\partial x}=0, & \dfrac{\partial S^{ijk}}{\partial x_i}=0, & \dfrac{\partial S^{ijk}}{\partial x_j}=0, & \dfrac{\partial S^{ijk}}{\partial x_k}=0, \\
\\
\dfrac{\partial S^{ijk}}{\partial x_{ij}}=0, & \dfrac{\partial S^{ijk}}{\partial x_{jk}}=0, & \dfrac{\partial S^{ijk}}{\partial x_{ik}}=0, & \dfrac{\partial S^{ijk}}{\partial x_{ijk}}=0,
\end{array}
\end{equation}
for each triple $i,j,k$. Symbolically, this can be put as $\delta(d\cL)=0$, where $\delta$ stands for the ``vertical'' differential (differential with respect to the dependent field variables $u$).
\end{definition}

\textbf{Remark.}
Corner equations (\ref{eq: corner eqs}) may be not uniquely solvable with respect to each of the fields, i.e. generically define correspondences. However, in the case of linear pluri-Lagrangian systems considered in the present paper, all corner equations are linear and this problem does not appear.
 \medskip

As demonstrated in \cite{BPS2}, the vertex star of any interior vertex of an oriented quad-surface in $\bbZ^m$ can be represented as a sum of (oriented) 3D-corners in $\bbZ^{m+1}$. Thus, the system of corner equations encompasses all possible discrete Euler-Lagrange equations for all possible quad-surfaces. In other words, solutions of a pluri-Lagrangian problem as introduced in Definition \ref{def:pluriLagr problem} are precisely solutions of the corresponding system of corner equations.
\medskip

\begin{definition}\label{def:L-consistency'} {\bf (Consistency of a pluri-Lagrangian problem)}
A pluri-Lagrangian problem with the 2-form $\cL$ is called {\em consistent} if, for any quad-surface $\Sigma$ flippable to the whole of $\bbZ^m$ (i.e., such that its images under sequences of elementary cube flips reach any point of $\bbZ^m$), any generic solution of Euler-Lagrange equations for $\Sigma$ can be extended to a solution of the system of corner equations on the whole $\bbZ^m$.
\end{definition}

Like in the case of quad-equations \cite{BS_quad, BS_DDG_book}, consistency is actually a local issue to be addressed for one elementary cube. The system of corner equations (\ref{eq: corner eqs}) for one elementary cube is heavily overdetermined. It consists of eight equations, each one connecting seven fields out of eight. Any six fields can serve as independent data, then one can use two of the corner equations to compute the remaining two fields, and the remaining six corner equations have to be satisfied identically. This justifies the following definition.

\begin{definition}\label{def: corner eqs consist} {\bf (Consistency of corner equations)} System (\ref{eq: corner eqs}) is called {\em consistent}, if it has the minimal possible rank 2, i.e., if exactly two of these equations are independent.
\end{definition}

The main feature of this definition is that the ``almost closedness'' of the 2-form $\cL$ on solutions of the system of corner equations is, so to say, built-in from the outset. 

\begin{theorem}\label{Th: almost closed}
For any triple of the coordinate directions $i,j,k$, the action $S^{ijk}$ over an elementary cube of these coordinate directions is a constant independent on the solution of the system of corner equations (\ref{eq: corner eqs}):
\[
S^{ijk}(x,\ldots,x_{ijk})=c^{ijk}={\rm const}  \pmod{\partial S^{ijk}/\partial x=0,\ \ldots,\
\partial S^{ijk}/\partial x_{ijk}=0}.
\]
\end{theorem}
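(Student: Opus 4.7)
The approach I would take exploits the observation that the system of corner equations \eqref{eq: corner eqs} consists \emph{precisely} of the vanishing of each of the eight partial derivatives of $S^{ijk}$ with respect to the eight corner fields. So on any common solution of the corner system, the entire ambient gradient of $S^{ijk}\colon\bbC^8\to\bbC$ vanishes, not merely the two independent components. Once this is said, it is plausible that $S^{ijk}$ has to be constant along the variety of solutions; the remainder of the proof is just a clean implementation of this idea.

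Concretely, I would first use Definition \ref{def: corner eqs consist} (rank equal to two) to parametrize the solution set. Any six of the eight cube fields can be chosen as free data; the remaining two fields are then determined by two of the corner equations, and by the consistency hypothesis the other six equations are automatically satisfied. In the linear setting of this paper, the solution set $\mathcal{M}\subset\bbC^8$ is therefore an affine subspace, carrying global coordinates $p_1,\dots,p_6$, and in particular connected. Then, by the chain rule applied to the restriction $S^{ijk}\big|_{\mathcal{M}}$,
\[
\frac{\partial S^{ijk}}{\partial p_\ell} \;=\; \sum_{x_\ast\in\{x,x_i,x_j,x_k,x_{ij},x_{ik},x_{jk},x_{ijk}\}}\frac{\partial S^{ijk}}{\partial x_\ast}\cdot\frac{\partial x_\ast}{\partial p_\ell}.
\]
Each factor $\partial S^{ijk}/\partial x_\ast$ vanishes on $\mathcal{M}$, because \emph{all} eight corner equations hold on a common solution (not just the two that were used to eliminate the dependent fields). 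Hence $\mathrm{d} S^{ijk}\big|_{\mathcal{M}}=0$, and since $\mathcal{M}$ is connected, $S^{ijk}$ takes a single value $c^{ijk}$ on it.

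I do not expect any serious obstacle: the statement is essentially a tautological consequence of the variational character of the corner equations, together with the rank-2 assumption that guarantees a six-parameter family of simultaneous solutions. The one place where some honesty is needed is the passage from ``all partial derivatives vanish'' to ``globally constant'', which requires connectedness of the solution variety; here this is free from linearity, but in a hypothetical nonlinear analogue one would either have to settle for a ``locally constant modulo the corner equations'' statement or argue connectedness of $\mathcal{M}$ by hand. The remark following Definition \ref{def: corner eqs consist} already signals that such subtleties do not arise in the present linear pluri-Lagrangian framework.
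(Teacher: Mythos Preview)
Your argument is correct and is essentially the paper's own proof, which reads in its entirety: ``On the connected six-dimensional manifold of solutions, the gradient of $S^{ijk}$ considered as a function of eight variables, vanishes by virtue of (\ref{eq: corner eqs}).'' You have simply unpacked this sentence, making explicit the parametrization by six free fields, the chain rule, and the role of connectedness.
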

\begin{proof} On the connected six-dimensional manifold of solutions, the gradient of $S^{ijk}$ considered as a function of eight variables, vanishes by virtue of (\ref{eq: corner eqs}).
\end{proof}

\begin{definition} \label{def: pluriharmonic} {\bf (Discrete pluriharmonic functions)}

If all $\cL(\sigma^{ij})=L(x,x_i,x_j,x_{ij})$ are quadratic forms of their arguments, then the action functional $S_\Sigma$ is called the {\em Dirichlet energy} corresponding to the quad-surface $\Sigma$. We call a solution $x:\bbZ^m\to\bbC$ of the pluri-Lagrangian problem in this case {\em a discrete pluriharmonic function}.
\end{definition}

Of course, this definition is not an immediate discretization of the classical notion of pluriharmonic functions on $\bbC^m$, and might be therefore misleading. However, we believe that these two notions are close in spirit and we hope that the further developments will establish a closer relation between the classical and the discrete pluriharmonicity, including approximation theorems for harmonic functions through discrete harmonic functions, cf. \cite{Chelkak_Smirnov}.

In the context of Definition \ref{def: pluriharmonic}, functional independence is replaced by linear independence, and the rank is understood in the sense of linear algebra.

\begin{theorem}\label{Th: closed}
The 2-form $\cL$ is closed on pluriharmonic functions. Let $x:\bbZ^m\to\bbR$ be a discrete pluriharmonic function, and let $\Sigma$, $\widetilde\Sigma$ be two quad-surfaces with the same boundary, then the Dirichlet energies of harmonic functions $x |_\Sigma$ and $x |_{\widetilde\Sigma}$ coincide.
\end{theorem}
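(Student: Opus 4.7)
The plan is to derive the closedness of $\cL$ on pluriharmonic functions directly from Theorem \ref{Th: almost closed}, and then deduce the coincidence of Dirichlet energies on cobounding surfaces via a discrete Stokes-type argument.

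First, I apply Theorem \ref{Th: almost closed} in the quadratic setting. For any triple of coordinate directions $i,j,k$ and any pluriharmonic function $x$, the theorem gives $S^{ijk}(x,\ldots,x_{ijk})=c^{ijk}$, a constant independent of the particular solution of the corner equations. Now two features of the quadratic case conspire: since each $\cL(\sigma^{ij})=L(x,x_i,x_j,x_{ij})$ is a quadratic form in the fields, so is $S^{ijk}=d\cL(\sigma^{ijk})$; and since the corner equations \eqref{eq: corner eqs} are linear, the trivial field $x\equiv 0$ is automatically a pluriharmonic function. Evaluating the quadratic form $S^{ijk}$ at this trivial solution forces $c^{ijk}=0$. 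Hence $d\cL(\sigma^{ijk})=0$ on every elementary 3-cube whenever $x$ is pluriharmonic, which is precisely the closedness of $\cL$ claimed in the first half of the statement.

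Second, I pass from local to global via a discrete Stokes argument. Given two quad-surfaces $\Sigma$ and $\widetilde{\Sigma}$ in $\bbZ^m$ sharing the same boundary, the formal difference $\Sigma-\widetilde{\Sigma}$ is a 2-cycle; in $\bbZ^m$ it bounds a 3-chain $C$, i.e.\ a signed sum of elementary cubes with $\partial C=\Sigma-\widetilde{\Sigma}$ (one can build $C$ by the sequence of elementary cube flips that connects the two surfaces, as in the pluri-Lagrangian framework of \cite{BPS2, BS_pluriharmonic}). Applying the discrete Stokes formula and invoking Step~1 yields
\[
S_\Sigma - S_{\widetilde{\Sigma}}=\sum_{\sigma^{ij}\in\Sigma-\widetilde{\Sigma}}\cL(\sigma^{ij})=\sum_{\sigma^{ijk}\in C}d\cL(\sigma^{ijk})=0,
\]
which is the desired equality of Dirichlet energies.

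The only nontrivial step is the identification $c^{ijk}=0$ in Step~1; this is precisely the point at which the pluriharmonic (quadratic Lagrangian, hence linear Euler--Lagrange) setting improves upon the general pluri-Lagrangian framework of Theorem~\ref{Th: almost closed}, where $c^{ijk}$ need only be constant. Everything else is a routine consequence of discrete exterior calculus on $\bbZ^m$.
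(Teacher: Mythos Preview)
Your proof is correct and follows the same approach as the paper: the key step is exactly the observation that the constants $c^{ijk}$ from Theorem~\ref{Th: almost closed} vanish because the trivial solution $x\equiv 0$ is pluriharmonic in the quadratic setting. The paper's proof consists of this single sentence and leaves the Stokes-type passage from closedness to surface-independence implicit, whereas you spell it out; but there is no substantive difference.
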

\begin{proof} The constants from Theorem \ref{Th: almost closed} can be evaluated on the trivial pluriharmonic function $x\equiv 0$, which gives $c^{ijk}=0$.
\end{proof}


\section{Discrete two-forms depending on black vertices only}
\label{sect pluri bw}

In the present paper, we are mainly interested in a particular class of discrete two-forms generalizing \eqref{Dirichlet g0} and \eqref{Dirichlet gg}. The generalization of  \eqref{Dirichlet g0} leads to a novel integrable 3D system, while the generalization of \eqref{Dirichlet gg} turns out to be a non-essential extension of the well known integrable 3D system, the star-triangle map. In the present section, we work out the novel system generalizing \eqref{Dirichlet g0}, while the extension of the star-triangle map generalizing \eqref{Dirichlet gg} is briefly discussed in Appendix \ref{Appendix A}. We call the points of the even sublattice $$\bbZ^m_{\rm even}=\{n\in\bbZ^m: |n|=n_1+\ldots+n_m=0\pmod 2\}$$ black, and the points of the odd sublattice $$\bbZ^m_{\rm odd}=\{n\in\bbZ^m: |n|=n_1+\ldots+n_m=1\pmod 2\}$$ white. There are two kinds of elementary squares $\sigma^{ij}=(n,n+e_i,n+e_i+e_j,n+e_j)$, depending on the color of the point $n$:
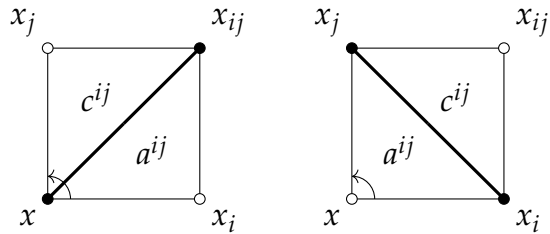
\begin{figure}[h]
\centering
\begin{tikzpicture}[scale=1]

\draw (0,0) -- (2,0); 
\draw (0,0) -- (0,2); 
\draw (2,2) -- (0,2); 
\draw (2,2) -- (2,0); 
\draw[very thick] (0,0)-- (1,1)
node[below right] {$a^{ij}$} node[above left] {$c^{ij}$};
\draw[very thick] (1,1)-- (2,2);

\draw[fill=black] (0,0) circle(2pt) node[below left] {$x$};
\draw[fill=black] (2,2) circle(2pt) node[above right] {$x_{ij}$};
\draw[fill=white] (2,0) circle(2pt) node[below right] {$x_i$};
\draw[fill=white] (0,2) circle(2pt) node[above left] {$x_j$};

\draw[-{>[scale=1]}] (.3,0) arc (0:90:.3);

\begin{scope}[xshift=40mm]
	\draw (0,0) -- (2,0); 
	\draw (0,0) -- (0,2); 
	\draw (2,2) -- (0,2); 
	\draw (2,2) -- (2,0);
	\draw[very thick] (0,2)-- (1,1)
	node[below left] {$a^{ij}$} node[above right] {$c^{ij}$};
	\draw[very thick] (2,0)-- (1,1);

	\draw[fill=white] (0,0) circle(2pt) node[below left] {$x$};
	\draw[fill=white] (2,2) circle(2pt) node[above right] {$x_{ij}$};
	\draw[fill=black] (2,0) circle(2pt) node[below right] {$x_i$};
	\draw[fill=black] (0,2) circle(2pt) node[above left] {$x_j$};
	
	\draw[-{>[scale=1]}] (.3,0) arc (0:90:.3);
\end{scope}
\end{tikzpicture}
\caption[quad-equation]{Two types of 2-cells of a bi-colored square lattice}
\label{fig: two-cells}
\end{figure}

If the points $n$ and $n+e_i+e_j$ are black, we set:
\begin{equation} \label{eq: 2-form bw 1}
\cL(\sigma^{ij})=\frac{1}{2}c^{ij}(x^2+x_{ij}^2)-a^{ij}xx_{ij}.
\end{equation}
Similarly, if the points $n+e_i$ and $n+e_j$ are black, we set:
\begin{equation} \label{eq: 2-form bw 2}
\cL(\sigma^{ij})=\frac{1}{2}c^{ij}(x_i^2+x_j^2)-a^{ij}x_ix_j.
\end{equation}
In both cases, $a^{ij}$, $c^{ij}$ are numbers assigned to the plaquettes $\sigma^{ij}$ (or to their black diagonals) and satisfy $a^{ij}=-a^{ji}$, $c^{ij}=-c^{ji}$. 

\textbf{Remark.} This ansatz is a generalization of the 2-form for the discrete complex analysis, which corresponds to the particular case
$$
a^{ij}=c^{ij}.
$$
This situation was considered in great detail in \cite{BS_pluriharmonic}. Note that under this condition, the two-form \eqref{eq: 2-form bw 1}, \eqref{eq: 2-form bw 2} becomes
\begin{equation} \label{eq: 2-form bw diag}
\cL(\sigma^{ij})=\frac{1}{2}a^{ij}(x_{ij}-x)^2, \quad \textrm{resp.} \quad \cL(\sigma^{ij})=\frac{1}{2}a^{ij}(x_j-x_i)^2,
\end{equation}
thus depends only on the differences of fields along (black) diagonals of $\sigma^{ij}$; paper \cite{BS_pluriharmonic} was devoted to the classification of pluri-Lagrangian systems with the latter property.
\medskip

For discrete 2-forms \eqref{eq: 2-form bw 1}, \eqref{eq: 2-form bw 2}, the exterior derivative $S^{ijk}=d\cL(\sigma^{ijk})$ given in \eqref{eq: Sijk} depends only on four black vertices of the elementary cube $\sigma^{ijk}$, so that the system of corner equations \eqref{eq: corner eqs} consists of four equations, each one connecting four fields at the black points. For definiteness, we write down this system for the case when the point $n$ is black. Taking into account the orientations of the sides of the elementary cube, we have:
\begin{align}
S^{ijk}=& \frac{1}{2}c^{ij}_k(x_{ik}^2+x_{jk}^2)-a^{ij}_k x_{ik}x_{jk}+\frac{1}{2}c^{jk}_i(x_{ij}^2+x_{ki}^2)-a^{jk}_i x_{ij}x_{ki}+\frac{1}{2}c^{ki}_j(x_{ij}^2+x_{jk}^2)-a^{ki}_j x_{ij}x_{jk} \nonumber\\
 &  -\frac{1}{2}c^{ij}(x^2+x_{ij}^2)+a^{ij}xx_{ij}-\frac{1}{2}c^{jk}(x^2+x_{jk}^2)+a^{jk}xx_{jk}-\frac{1}{2}c^{ki}(x^2+x_{ki}^2)+a^{ki}xx_{ki}.
\end{align}
Therefore, the system of corner equations consists of
\begin{eqnarray}
\frac{\partial S^{ijk}}{\partial x} &=& -(c^{ij}+c^{jk}+c^{ki})x+a^{ij}x_{ij}+a^{jk}x_{jk}+a^{ki}x_{ki}=0,\\
\frac{\partial S^{ijk}}{\partial x_{ij}} &=& a^{ij}x+(c^{jk}_i+c^{ki}_j-c^{ij})x_{ij}-a^{ki}_jx_{jk}-a^{jk}_ix_{ki}=0, \quad
\end{eqnarray}
as well as of two further equations obtained from the last one by cyclic shifts of the indices $(i,j,k)$. Each of these four equations is a linear equation relating the four fields at four black points of the cube. In this case, Definition \ref{def: corner eqs consist} has to be modified as follows: the system of corner equations must have rank 1, i.e., all four equations must be proportional.

\begin{theorem}
Lagrangian 2-forms \eqref{eq: 2-form bw 1}, \eqref{eq: 2-form bw 2} are consistent, i.e., form a pluri-Lagrangian system, if and only if their coefficients satisfy
\begin{eqnarray}
a^{ij}_k & = & \frac{a^{jk}a^{ki}}{c^{ij}+c^{jk}+c^{ki}}, \label{eq: a tilde} \\
c^{ij}_k & = & \frac{1}{2}\left(c^{jk}+c^{ki}-c^{ij}-\frac{(a^{jk})^2+(a^{ki})^2-(a^{ij})^2}{c^{ij}+c^{jk}+c^{ki}}\right). \label{eq: c tilde}
\end{eqnarray}
The rational map $F:(a^{ij},c^{ij},a^{jk},c^{jk},a^{ki},c^{ki})\mapsto(a^{ij}_k,c^{ij}_k,a^{jk}_i,c^{jk}_i,a^{ki}_j,c^{ki}_j)$ will be called \emph{two-field star-triangle map}, see Figure \ref{fig: star-triangle}.
\end{theorem}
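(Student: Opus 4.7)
The plan is to exploit the fact that all four corner equations at a cube depend only on the fields at the four black vertices $x, x_{ij}, x_{jk}, x_{ki}$, so the rank-1 consistency requirement reduces to proportionality of four linear forms in these four variables.

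First I would write out all four corner equations explicitly. The equation $\partial S^{ijk}/\partial x=0$ is already given in the text, and the equation $\partial S^{ijk}/\partial x_{ij}=0$ is also displayed; the remaining two, $\partial S^{ijk}/\partial x_{jk}=0$ and $\partial S^{ijk}/\partial x_{ki}=0$, are obtained by cyclic shift of $(i,j,k)$. Reading off the coefficient vectors in the ordered basis $(x,x_{ij},x_{jk},x_{ki})$, the four equations have coefficient rows
\begin{align*}
R_0 &= \bigl(-(c^{ij}+c^{jk}+c^{ki}),\ a^{ij},\ a^{jk},\ a^{ki}\bigr),\\
R_{ij} &= \bigl(a^{ij},\ c^{jk}_i+c^{ki}_j-c^{ij},\ -a^{ki}_j,\ -a^{jk}_i\bigr),
\end{align*}
and two cyclic permutations $R_{jk}$, $R_{ki}$. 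Rank-1 consistency is equivalent to requiring $R_{ij},R_{jk},R_{ki}$ to be scalar multiples of $R_0$.

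Next I would extract the proportionality constants. Denote $\Sigma=c^{ij}+c^{jk}+c^{ki}$ and write $R_{ij}=\lambda_{ij}R_0$. Comparing the first entries gives $\lambda_{ij}=-a^{ij}/\Sigma$. Comparing the third and fourth entries of $R_{ij}$ with those of $\lambda_{ij}R_0$ yields
\[
a^{ki}_j=\frac{a^{ij}a^{jk}}{\Sigma},\qquad a^{jk}_i=\frac{a^{ij}a^{ki}}{\Sigma}.
\]
Performing the same analysis on $R_{jk}$ and $R_{ki}$ reproduces these expressions consistently (by cyclic symmetry), and yields formula \eqref{eq: a tilde} for $a^{ij}_k$. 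Then comparing the second entry of each $R_{\bullet}$ with $\lambda_{\bullet}R_0$ produces three linear equations for the unknowns $c^{ij}_k,c^{jk}_i,c^{ki}_j$, namely
\begin{equation*}
c^{jk}_i+c^{ki}_j = c^{ij}-\frac{(a^{ij})^2}{\Sigma},\qquad \text{and two cyclic shifts.}
\end{equation*}
Summing all three and then subtracting the above identity solves for $c^{ij}_k$, giving precisely \eqref{eq: c tilde}.

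Finally I would address sufficiency: if $a^{ij}_k, c^{ij}_k$ are defined by \eqref{eq: a tilde}--\eqref{eq: c tilde}, then by construction each $R_{ij}$ is a scalar multiple of $R_0$, so the system of corner equations has rank~1, which is the required pluri-Lagrangian consistency. The only step that needs care is verifying that the three linear equations for the $c$'s are consistent and uniquely determine each of them — here one notes that their coefficient matrix (the $3\times 3$ matrix with zeros on the diagonal and ones off-diagonal) is invertible, so solvability is automatic, and the cyclic symmetry of the right-hand side guarantees that the resulting formula is itself cyclically symmetric in $(i,j,k)$ (up to the natural placement of $c^{ij}$ on the left). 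I expect this bookkeeping — checking that the proportionality constants $\lambda_{ij},\lambda_{jk},\lambda_{ki}$ extracted from the third/fourth-entry comparisons agree with the one extracted from the first entry — to be the only subtle point, and it is resolved precisely by formula \eqref{eq: a tilde}.
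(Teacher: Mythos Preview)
Your proposal is correct and is exactly the ``straightforward analysis of the rank~1 conditions'' that the paper invokes as its entire proof; you have simply written out in full the linear-algebra computation that the authors leave implicit. The extraction of the proportionality constants, the resulting formulas for $a^{ij}_k$, and the $3\times 3$ linear system for the $c^{ij}_k$ are all correct and match what one obtains by carrying out the paper's one-line proof.
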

\begin{proof}
A straightforward analysis of the rank 1 conditions.
\end{proof}

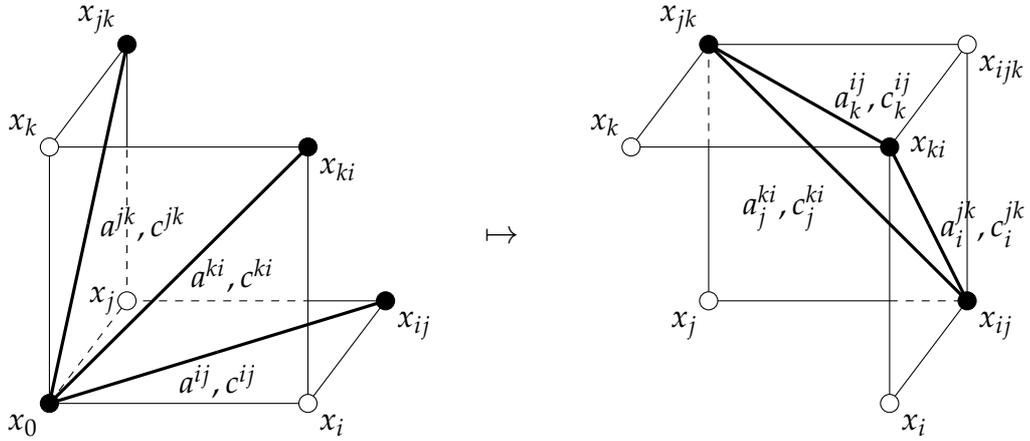
\begin{figure}[htbp]
\centering
\begin{tikzpicture}[scale=1.7]

\draw (0,0) -- (2,0); 
\draw (0,0) -- (0,2); 
\draw (2,2) -- (0,2); 
\draw (2,2) -- (2,0); 

\draw[dashed] (0,0) -- (.6,.8); 
\draw (2.6,.8) -- (2,0); 
\draw (0,2) -- (0.6,2.8); 

\draw[dashed] (0.6,0.8) -- (2,0.8); \draw (2,0.8) -- (2.6,0.8);
\draw[dashed] (0.6,0.8) -- (0.6,2.0);  \draw (0.6,2.0) -- (0.6, 2.8);

\draw[very thick] (0,0) -- (0.3,1.4) node[right]  {$a^{jk},c^{jk}$}; \draw[very thick] (0.3,1.4) -- (0.6,2.8);
\draw[very thick] (0,0) -- (1,1) node[right] {$a^{ki},c^{ki}$}; \draw[very thick] (1,1) -- (2,2);
\draw[very thick] (0,0) -- (1.3,0.4) node[below] {$a^{ij},c^{ij}$}; \draw[very thick] (1.3,0.4) -- (2.6,0.8);

\draw[fill=white] (0.6,0.8) circle(2pt) node[left] {$x_j$};
\draw[fill=black] (2.6,0.8) circle(2pt) node[below right] {$x_{ij}$};
\draw[fill=black] (0.6,2.8) circle(2pt) node[above left] {$x_{jk}$};

\draw[fill=black] (0,0) circle(2pt) node[below left] {$x_0$};
\draw[fill=black] (2,2) circle(2pt) node[below right] {$x_{ki}$};
\draw[fill=white] (2,0) circle(2pt) node[below right] {$x_i$};
\draw[fill=white] (0,2) circle(2pt) node[above left] {$x_k$};

\draw (3.5,1.3) node{$\mapsto$};

\begin{scope}[xshift=4.5cm,yshift=0]	
\draw (2,2) -- (0,2); 
\draw (2,2) -- (2,0); 

\draw (2.6,.8) -- (2,0); 
\draw (2,2) -- (2.6,2.8); 
\draw (0,2) -- (0.6,2.8); 

	\draw (0.6,0.8) -- (2,0.8);  \draw[dashed] (2,0.8) -- (2.6,0.8); 
	\draw (0.6,0.8) -- (0.6,2); \draw[dashed] (0.6,2) -- (0.6,2.8); 
	\draw (2.6,2.8) -- (0.6,2.8); 
	\draw (2.6,2.8) -- (2.6,0.8); 
	
\draw[very thick] (0.6,2.8) -- (1.3,2.4) node[right] {$\;\;\, a^{ij}_k,c^{ij}_k$}; \draw[very thick] (1.3,2.4) -- (2,2);
\draw[very thick] (2.6,0.8) -- (2.3,1.4) node[right] {$a^{jk}_i,c^{jk}_i$}; \draw[very thick] (2.3,1.4) -- (2,2);
\draw[very thick] (0.6,2.8) -- (1.6,1.8) node[below left] {$a^{ki}_j,c^{ki}_j$}; \draw[very thick] (1.6,1.8) -- (2.6,0.8);

\begin{scope}[xshift=6mm,yshift=8mm]	
	\draw[fill=white] (0,0) circle(2pt) node[below left] {$x_j$};
	\draw[fill=white] (2,2) circle(2pt) node[below right] {$x_{ijk}$};
	\draw[fill=black] (2,0) circle(2pt) node[below right] {$x_{ij}$};
	\draw[fill=black] (0,2) circle(2pt) node[above left] {$x_{jk}$};
\end{scope}

\draw[fill=black] (2,2) circle(2pt) node[right] {$\;x_{ki}$};
\draw[fill=white] (2,0) circle(2pt) node[below right] {$x_i$};
\draw[fill=white] (0,2) circle(2pt) node[above left] {$x_k$};
\end{scope}

\end{tikzpicture}

\caption[quad-equation]{Two-field star-triangle map}
\label{fig: star-triangle}
\end{figure}


The inverse map $G:(a^{ij}_k,c^{ij}_k,a^{jk}_i,c^{jk}_i,a^{ki}_j,c^{ki}_j)\mapsto(a^{ij},c^{ij},a^{jk},c^{jk},a^{ki},c^{ki})$ (the ``triangle-star'' map) is no more rational. Indeed, as one easily computes, the $c$-part of the map $F$ can be reversed in a unique way:
\begin{equation} \label{eq: c}
c^{ij}=c^{jk}_i+c^{ki}_j+\frac{(a^{ij})^2}{c^{ij}+c^{jk}+c^{ki}}=c^{jk}_i+c^{ki}_j+\frac{a^{jk}_ia^{ki}_j}{a^{ij}_k},
\end{equation}
and then the $a$-part of the map $F$ is reversed up to a simultaneous change of sign of all three variables $a^{ij}$, $a^{jk}$, $a^{ki}$, namely,  from equations \eqref{eq: a tilde} and \eqref{eq: c} we find:
\begin{equation}
a^{ij}=\frac{D}{a^{ij}_k}, \quad a^{jk}=\frac{D}{a^{jk}_i}, \quad a^{ki}=\frac{D}{a^{ki}_j}, 
\end{equation}
where $D$ is one of the square roots of the quantity
$$
D^2=a^{ij}_ka^{jk}_ia^{ki}_j(c^{ij}+c^{jk}+c^{ki})=(a^{ij}_ka^{jk}_i)^2+(a^{jk}_ia^{ki}_j)^2+(a^{ki}_ja^{ij}_k)^2+2a^{ij}_ka^{jk}_ia^{ki}_j(c^{ij}_k+c^{jk}_i+c^{ki}_j).
$$
Within one cube, one can trade a change of sign of $D$ for changing the sign of $x_0$.

\textbf{Remark.} Under the reduction $a^{ij}=c^{ij}$ corresponding to the discrete complex analysis, both relations \eqref{eq: a tilde} and \eqref{eq: c tilde} reduce to the classical star-triangle relation, cf. \cite{BS_pluriharmonic}:
\begin{equation} \label{star-triangle}
a^{ij}_k = \frac{a^{jk}a^{ki}}{a^{ij}+a^{jk}+a^{ki}},
\end{equation}
while the inverse ``triangle-star'' map becomes rational:
\begin{equation} \label{triangle-star}
a^{ij} = \frac{a^{ij}_ka^{jk}_i+a^{jk}_ia^{ki}_j+a^{ki}_ja^{ij}_k}{a^{ij}_k}.
\end{equation}

\begin{theorem} \label{theorem 2-field star triangle}
The two-field star-triangle map \eqref{eq: a tilde}, \eqref{eq: c tilde} is 4D consistent.
\end{theorem}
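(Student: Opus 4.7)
The plan is to verify 4D consistency by tracking the star-triangle map around an elementary 4D hypercube.

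Setup. Consider the 4-cube with coordinate directions $\{i,j,k,l\}$. Its six base 2-planes through the origin $0$ carry initial coefficient pairs $(a^{IJ},c^{IJ})$ for the six pairs $\{I,J\}\subset\{i,j,k,l\}$. The four 3-faces through $0$ carry star-triangle configurations with star at $0$; applying the forward map $F$ of \eqref{eq: a tilde}, \eqref{eq: c tilde} on each produces the twelve singly-shifted coefficient pairs $(a^{IJ}_m, c^{IJ}_m)$, one for every 2-plane adjacent to the origin but shifted by a single $e_m$. The four remaining 3-faces, each containing the anti-origin $0'=e_i+e_j+e_k+e_l$, carry the opposite configuration (triangle at bottom, star at top); the appropriately oriented star-triangle correspondence on each of them determines the six doubly-shifted pairs $(a^{IJ}_{mn}, c^{IJ}_{mn})$ at the 2-planes at position $e_m+e_n$. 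The content of 4D consistency is that each such doubly-shifted coefficient, being jointly accessible via either of the two anti-origin 3-faces that share it, receives the same value from both computations.

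Verification. Substituting \eqref{eq: a tilde} and \eqref{eq: c tilde} for all intermediate coefficients reduces the claim to a polynomial identity in the twelve base variables. Two auxiliary identities, obtained by cyclically summing the defining formulas over the indices of a single triple, are central:
\begin{align*}
a^{IJ}_m\,a^{IJ} \;&=\; \frac{a^{IJ}\,a^{Jm}\,a^{mI}}{s_{IJm}}, \qquad s_{IJm}:=c^{IJ}+c^{Jm}+c^{mI},\\
c^{IJ}_m+c^{Jm}_I+c^{mI}_J \;&=\; \tfrac{1}{2}\!\left(s_{IJm} - \tfrac{(a^{IJ})^2+(a^{Jm})^2+(a^{mI})^2}{s_{IJm}}\right).
\end{align*}
The first is fully symmetric in $I,J,m$, already encoding much of the required 4D symmetry; the second describes the transformed triangle sum, which appears as a denominator at the second application of the map. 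Together they express the candidate second-level quantities as rational functions of the base data, whose equality across the two routes can be checked directly.

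Main obstacle. The resulting identities couple coefficients from all six base 2-planes and exhibit no immediate cancellation. Two strategies tame the complexity: (i) the map and the consistency conditions are equivariant under the $S_4$-action on the coordinate directions, so it suffices to verify one identity per orbit; (ii) the elliptic parameterization of Section \ref{sect solve fff}, with $a^{IJ}=f(\alpha^I-\alpha^J)$ and $c^{IJ}=g_0(\alpha^I-\alpha^J)$, yields $s_{IJm}=a^{IJ}a^{Jm}a^{mI}$ by \eqref{g+g+g dif}, under which the map degenerates to $a^{IJ}_m=1/a^{IJ}$ and 4D consistency becomes transparent. Although this four-parameter subfamily does not Zariski-span the twelve-dimensional data space, it constrains the algebraic form of the identity sufficiently that the remaining verification reduces to a small symmetry-reduced check, which can be completed by hand or with a short symbolic computation.
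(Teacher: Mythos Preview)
Your setup of the 4D cube is right in spirit, but the argument has two genuine gaps.

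\medskip
\textbf{The second layer is not another application of $F$.} With $n$ black, the four 3-cubes $T_\ell\sigma^{ijk}$ through the anti-origin have a \emph{white} base vertex $n+e_\ell$: the black diagonals of their bottom 2-faces form a \emph{triangle}, and the black diagonals of their top 2-faces form a \emph{star} at $n'$. Hence the passage from the twelve middle pairs $(a^{IJ}_m,c^{IJ}_m)$ to the six doubly-shifted pairs is the \emph{inverse} map $G$ (triangle $\to$ star), not $F$. As the paper notes just after \eqref{eq: c}, the $a$-part of $G$ involves a square root of
\[
D^2=(a^{ij}_ka^{jk}_i)^2+(a^{jk}_ia^{ki}_j)^2+(a^{ki}_ja^{ij}_k)^2+2a^{ij}_ka^{jk}_ia^{ki}_j(c^{ij}_k+c^{jk}_i+c^{ki}_j),
\]
so the doubly-shifted $a$'s are not rational in the base data, and your claim that ``substituting \eqref{eq: a tilde} and \eqref{eq: c tilde} \dots\ reduces the claim to a polynomial identity'' is false as stated. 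The paper's proof addresses exactly this point: in Case~1 it verifies $(T_ka^{ij}_\ell)^2=(T_\ell a^{ij}_k)^2$ rather than equality of the $a$'s themselves, and it also treats Case~2 ($n$ white: first $G$, then $F$), where one must check that the result is independent of the sign choices made in the four applications of $G$. Your write-up omits both the square-root issue and the second case.

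\medskip
\textbf{The elliptic shortcut is not a proof.} In strategy (ii) you invoke the four-parameter solution $a^{IJ}=f(\alpha^I-\alpha^J)$, $c^{IJ}=g_0(\alpha^I-\alpha^J)$, correctly observe that it does not Zariski-span the twelve-dimensional initial-data space, and then assert that it nevertheless ``constrains the algebraic form of the identity sufficiently''. That is not an argument: a rational identity in twelve variables cannot be established by checking it on a four-parameter subvariety without an explicit degree bound or interpolation scheme, neither of which you supply. The paper makes no such claim; it simply carries out the direct symbolic verification (of the correct, square-root-bearing identities) and reports that it goes through.

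\medskip
In short, your auxiliary identities for $a^{IJ}_m a^{IJ}$ and for $c^{IJ}_m+c^{Jm}_I+c^{mI}_J$ are correct and potentially useful, but they feed into the wrong second step. To repair the argument, replace the second application of $F$ by $G$, formulate the consistency check at the level of $(T_k a^{ij}_\ell)^2$ and $T_k c^{ij}_\ell$, and then actually perform (or delegate to a CAS) the resulting rational-function verification; drop the elliptic-family heuristic entirely.
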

\begin{proof}
Let us sketch the exact statements which have to be checked. Actually, the proof should be done separately depending on the color of the vertex $n$ of the 4D cube $\sigma^{1234}$.

\emph{Case 1: $n$ is a black vertex.} One first applies the star-triangle map $F$ within the four 3D cubes $\sigma^{ijk}$ adjacent to $n$: 
$$
(a^{ij}_k,c^{ij}_k,a^{jk}_i,c^{jk}_i,a^{ki}_j,c^{ki}_j)=F(a^{ij},c^{ij},a^{jk},c^{jk},a^{ki},c^{ki}),
$$
and then applies the triangle-star map $G$ within the four 3D cubes $T_\ell \sigma^{ijk}$ adjacent to the opposite vertex $n+e_1+e_2+e_3+e_4$:
$$
(T_ka^{ij}_\ell,T_ia^{jk}_\ell,T_ja^{ki}_\ell, T_kc^{ij}_\ell,T_ic^{jk}_\ell,T_jc^{ki}_\ell)=G(a^{ij}_\ell,a^{jk}_\ell,a^{ki}_\ell, c^{ij}_\ell,c^{jk}_\ell,c^{ki}_\ell).
$$
The 4D consistency means $T_ka^{ij}_\ell=T_{\ell}a^{ij}_k$ and $T_kc^{ij}_\ell=T_{\ell}c^{ij}_k$ for the results coming from the 3D cubes  $T_\ell \sigma^{ijk}$ and $T_k\sigma^{ij\ell}$. Due to the presence of square roots in the formulas for the $a$-part of the map $G$, we actually show that $(T_ka^{ij}_\ell)^2=(T_{\ell}a^{ij}_k)^2$.

\emph{Case 2: $n$ is a white vertex.} One first applies the triangle-star map $G$ within the four 3D cubes $\sigma^{ijk}$ adjacent to $n$: 
$$
(a^{ij}_k,c^{ij}_k,a^{jk}_i,c^{jk}_i,a^{ki}_j,c^{ki}_j)=G(a^{ij},c^{ij},a^{jk},c^{jk},a^{ki},c^{ki}),
$$
and then applies the star-triangle map $F$ within the four 3D cubes $T_\ell \sigma^{ijk}$ adjacent to the opposite vertex $n+e_1+e_2+e_3+e_4$:
$$
(T_ka^{ij}_\ell,T_ia^{jk}_\ell,T_ja^{ki}_\ell, T_kc^{ij}_\ell,T_ic^{jk}_\ell,T_jc^{ki}_\ell)=F(a^{ij}_\ell,a^{jk}_\ell,a^{ki}_\ell, c^{ij}_\ell,c^{jk}_\ell,c^{ki}_\ell).
$$
Again, 4D consistency means $T_ka^{ij}_\ell=T_{\ell}a^{ij}_k$ and $T_kc^{ij}_\ell=T_{\ell}c^{ij}_k$ for the results coming from the 3D cubes  $T_\ell \sigma^{ijk}$ and $T_k\sigma^{ij\ell}$. This is fulfilled for any choice of the square roots for the $a$-part of the map $G$ within the 3D cubes $\sigma^{ijk}$.

All these verifications are done by a straightforward computation, which is well suitable for a symbolic software, but can be easily done also with bare hands. 
\end{proof}

A relation to Dirichlet energy and Laplace equations of Section \ref{sect Laplace} is as follows. Any rhombically embedded quad-graph $G$ (at least, a quasicrystallic one, with a finite number $m$ of different rhombus sides) can be realized as a quad-surface $\Sigma_G$ in $\bbZ^m$, as described in \cite{BMS, BS_DDG_book}. The action functional $S_G=\sum_{\sigma^{ij}\in \Sigma_G}\mathcal L(\sigma^{ij})$ is the Dirichlet energy on $G$. The pluri-Lagrangian property ensures that 
for quad-graphs related by a star-triangle flip, as illustrated on Figure \ref{fig: flip quad-graph}, we have also a relation between the corresponding Laplace operators and between solutions of the Laplace equations on both graphs. 

\begin{figure}[htbp]
\centering
\begin{tikzpicture}[scale=1.3]

\draw (1,0) -- (3,0); 
\draw (3,0) -- (4.732,1); 
\draw (4.732,1)  -- (2.732,1); 
\draw (2.732,1) -- (1,0); 

\draw (1,0) -- (0.628,1.970);
\draw (0.628,1.970) -- (2.360,2.970);
\draw (2.360,2.970) -- (2.732,1);

\draw (2.360,2.970) -- (4.360,2.970);
\draw (4.360,2.970) -- (4.732,1);

\draw[fill=black] (2.732,1) circle(2pt) node[below right] {$\; x_0$};

\draw[fill=black] (3,0) circle(2pt) node[below right] {$x_{ij}$};
\draw[fill=black] (0.628,1.970) circle(2pt) node[above left] {$x_{ki}$};
\draw[fill=black] (4.360,2.970) circle(2pt) node[right] {$\; x_{jk}$};

\draw[fill=white] (1,0) circle(2pt) node[left] {$x_i$};
\draw[fill=white] (2.360,2.970) circle(2pt) node[above left] {$x_k$};
\draw[fill=white] (4.732,1) circle(2pt) node[below right] {$x_j$};

\draw[very thick] (2.732,1) -- (3,0); 
\draw[very thick] (2.732,1) -- (0.628,1.970); 
\draw[very thick] (2.732,1) -- (4.360,2.970); 

\draw[-{>[scale=1]}] (1.3,0) arc (0:30:.3);
\draw[-{>[scale=1]}] (4.68,1.295) arc (100:180:.3);
\draw[-{>[scale=1]}] (2.1,2.82) arc (210:280:.3);

\draw (1.35,0.15) node [right] {$\pi-\phi_1$};
\draw (4.5,1.25) node [left] {$\pi-\phi_2$};
\draw (2.55,2.55) node [below left] {$\pi-\phi_3$};

\draw (5.6,1.3) node{$\mapsto$};

\begin{scope}[xshift=5.7cm,yshift=0]	
\draw (1,0) -- (3,0); 
\draw (3,0) -- (4.732,1);

\draw (1,0) -- (0.628,1.970);
\draw (0.628,1.970) -- (2.360,2.970);

\draw (2.360,2.970) -- (4.360,2.970);
\draw (4.360,2.970) -- (4.732,1);

\draw (3,0)  --  (2.628,1.970); 
\draw  (0.628,1.970) --  (2.628,1.970);
\draw (4.360,2.970) --   (2.628,1.970); 

\draw[fill=white] (2.628,1.970) circle(2pt) node[below left] {$x_{ijk}$};

\draw[fill=black] (3,0) circle(2pt) node[below right] {$x_{ij}$};
\draw[fill=black] (0.628,1.970) circle(2pt) node[above left] {$x_{ki}$};
\draw[fill=black] (4.360,2.970) circle(2pt) node[right] {$\; x_{jk}$};

\draw[fill=white] (1,0) circle(2pt) node[left] {$x_i$};
\draw[fill=white] (2.360,2.970) circle(2pt) node[above left] {$x_k$};
\draw[fill=white] (4.732,1) circle(2pt) node[below right] {$x_j$};

\draw[-{>[scale=1]}] (1.3,0) arc (0:100:.3);
\draw[-{>[scale=1]}] (4.68,1.295) arc (100:210:.3);
\draw[-{>[scale=1]}] (2.1,2.82) arc (210:360:.3);

\draw (1.2,0.3) node [right] {$\phi_2$};
\draw (3.9,1.2) node [right] {$\phi_3$};
\draw (2.4,3.2) node [right] {$\phi_1$};

\draw[very thick]  (3,0) -- (0.628,1.970) -- (4.360,2.970) -- (3,0); 
\end{scope}

\end{tikzpicture}

\caption[quad-equation]{Star-triangle flip on a bipartite rhombic quad-graph (for better readability the angles at the white vertices of the rhombi are marked)}
\label{fig: flip quad-graph}
\end{figure}
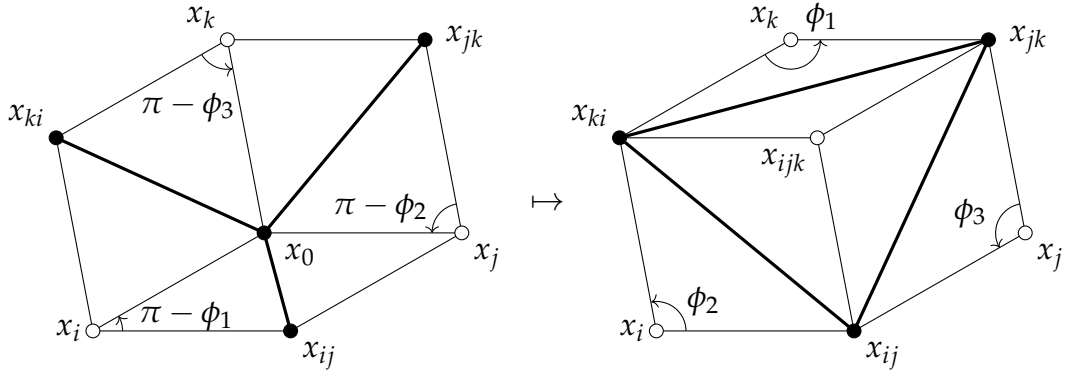


Dirichlet energy of of Section \ref{sect Laplace} corresponds to the following special solution of the two-field star-triangle map. If $\sigma^{ij}\in\Sigma_G$ is realized as a rhombus in $G$ with the angle $\phi$ at the black vertices, then
$$
a^{ij}=f(\phi)=\frac{\vartheta_1(\phi/2)}{\vartheta_2(\phi/2)}, \quad c^{ij}=g_0(\phi)=-\frac{1}{\vartheta_3\vartheta_4}\cdot\frac{\vartheta_2'(\phi/2)}{\vartheta_2(\phi/2)}.
$$
The star-triangle transformation acts on three rhombi in $G$ adjacent to a black point $x_0$, corresponding to $\sigma^{ij}$, $\sigma^{jk}$, $\sigma^{ki}$, with the  angles at the black vertices $\phi_1$, $\phi_2$, $\phi_3$ satisfying $$\phi_1+\phi_2+\phi_3=2\pi.$$ The images $T_k\sigma^{ij}$, $T_i\sigma^{jk}$, $T_j\sigma^{ki}$ are realized as three rhombi in the flipped graph, adjacent to a white point $x_{ijk}$, with the angles at the black vertices $\pi-\phi_1$, $\pi-\phi_2$, $\pi-\phi_3$. Thus,
$$
a^{ij}=f(\phi_1), \quad a^{jk}=f(\phi_2), \quad a^{ki}=f(\phi_3), 
$$
$$
c^{ij}=g_0(\phi_1), \quad c^{jk}=g_0(\phi_2), \quad c^{ki}=g_0(\phi_3),
$$
and
$$
a^{ij}_k=f(\pi-\phi_1), \quad a^{jk}_i=f(\pi-\phi_2), \quad a^{ki}_j=f(\pi-\phi_3), 
$$
$$
c^{ij}_k=g_0(\pi-\phi_1), \quad c^{jk}_i=g_0(\pi-\phi_2), \quad c^{ki}_j=g_0(\pi-\phi_3).
$$
In particular, for this solution we have $a^{ij}a_k^{ij}=a^{jk}a_i^{jk}=a^{ki}a_j^{ki}=1$.


\section{Conclusions}
\label{sect Conclusions}

Let us summarize the main findings of this paper. 

In the first part of the paper, we classify linear integrable (multi-dimensionally consistent) quad-equations \eqref{quad eq} on bipartite isoradial quad-graphs in $\bbC$, enjoying natural symmetries and the property that the restriction of their solutions to the black vertices satisfies the Laplace type equation \eqref{Laplace eq}. The classification reduces to solving a functional equation \eqref{eq fff} for the coefficient $f(\alpha,\beta)$ of equation \eqref{quad eq}. In the case when $f(\alpha,\beta)=f(\alpha-\beta)$, we give a complete solution of the functional equation, which is expressed in terms of elliptic functions. There are two real analytic reductions, corresponding to the cases when the underlying complex torus is of a rectangular type or of a rhombic type. The solution corresponding to the rectangular type was previously found in \cite{BTR, dT}. Using the multi-dimensional consistency, we construct the discrete exponential function on V(G) which serves as a basis of solutions of the quad-equation. 

There are indications that some of the results of \cite{BTR, dT} admit generalizations related to algebraic curves of higher order, see \cite{Fock, George}. It would be very important to find out which results of the present paper can be generalized in this direction.

In the second part of the paper, we focus on the integrability of discrete linear variational problems. We consider discrete pluri-harmonic functions, as defined in\cite{BS_pluriharmonic}, in the case when the underlying discrete 2-form depends on the fields at black vertices only. In an important particular case, we show that the problem reduces to a two-field generalization of the classical star-triangle map. We prove the integrability of this novel 3D system by showing its multi-dimensional consistency. The Laplacians from the first part come as a special solution of the two-field star-triangle map. We leave a general classification of discrete pluri-harmonic functions as an important open problem.


\section*{Acknowledgment}

This research is supported by the DFG Collaborative Research Center TRR 109 ``Discretization in Geometry and Dynamics''.


\begin{appendix}

\section{A pluri-Lagrangian generalization of the discrete 2-form \eqref{Dirichlet gg}}
\label{Appendix A}

In this section, it will be enough to make only local considerations, within one elementary cube $\sigma^{123}$, where we will assume for definiteness that the point $n$ is black. We set for the three cyclic permutations $(ijk)$ of $(123)$:
\begin{equation} \label{eq: 2-form bw corner 1}
\cL(\sigma^{ij})=\frac{1}{2}(b^{ij}x^2+c^{ij}x_{ij}^2)-a^{ij}xx_{ij},
\end{equation}
and
\begin{equation} \label{eq: 2-form bw corner 2}
\cL(\sigma^{ij}_k)=\frac{1}{2}(b^{ij}_k x_{jk}^2+c^{ij}_k x_{ki}^2)-a^{ij}_k x_{jk}x_{ki}.
\end{equation}
Thus, $a^{ij}$ are naturally assigned to the black diagonals of the plaquettes $\sigma^{ij}$, while $b^{ij}$, $c^{ij}$ are naturally assigned to the corresponding black corners of those plaquettes. Moreover, inspired by the formula \eqref{eq 2-form complete square}, we require the quadratic forms \eqref{eq: 2-form bw corner 1}, \eqref{eq: 2-form bw corner 2} to be complete squares:
\begin{equation}\label{cond complete square}
b^{ij}c^{ij}=(a^{ij})^2, \quad b^{ij}_kc^{ij}_k=(a^{ij}_k)^2.
\end{equation}
The exterior derivative $S^{123}=d\cL(\sigma^{123})$, as defined in \eqref{eq: Sijk}, is computed as follows:
\begin{equation}
S^{123}= \sum_{(ijk)={\rm c.p.}(123)}\left(\frac{1}{2}(b^{ij}_k x_{jk}^2+c^{ij}_k x_{ki}^2)-a^{ij}_k x_{jk}x_{ki}\right)-\left(\frac{1}{2}(b^{ij} x^2+c^{ij} x_{ij}^2)-a^{ij}xx_{ij}\right).
\end{equation}
Therefore, the system of corner equations consists of
\begin{eqnarray}
\frac{\partial S^{123}}{\partial x} &=& -(b^{12}+b^{23}+b^{31})x+a^{12}x_{12}+a^{23}x_{23}+a^{31}x_{31}=0,\\
\frac{\partial S^{123}}{\partial x_{ij}} &=& a^{ij}x+(c^{jk}_i+b^{ki}_j-c^{ij}) x_{ij}-a^{ki}_j x_{jk}-a^{jk}_i x_{ki}=0.
\end{eqnarray}
\begin{theorem}
Lagrangian 2-forms \eqref{eq: 2-form bw corner 1}, \eqref{eq: 2-form bw corner 2} with coefficients satisfying \eqref{cond complete square} are consistent, i.e. form a pluri-Lagrangian system, if and only if their coefficients satisfy
\begin{equation}\label{3-field s-t map a}
a^{12}_3=\frac{a^{23}a^{31}}{b^{12}+b^{23}+b^{31}}, \quad  a^{23}_1=\frac{a^{31}a^{12}}{b^{12}+b^{23}+b^{31}}, \quad  a^{31}_2=\frac{a^{12}a^{23}}{b^{12}+b^{23}+b^{31}}, 
\end{equation}
\begin{equation}\label{3-field s-t map b}
b^{12}_3=\frac{c^{23}b^{31}}{b^{12}+b^{23}+b^{31}}, \quad  b^{23}_1=\frac{c^{31}b^{12}}{b^{12}+b^{23}+b^{31}}, \quad b^{31}_2=\frac{c^{12}b^{23}}{b^{12}+b^{23}+b^{31}},
\end{equation}
\begin{equation}\label{3-field s-t map c}
c^{12}_3=\frac{b^{23}c^{31}}{b^{12}+b^{23}+b^{31}}, \quad  c^{23}_1=\frac{b^{31}c^{12}}{b^{12}+b^{23}+b^{31}}, \quad  c^{31}_2=\frac{b^{12}c^{23}}{b^{12}+b^{23}+b^{31}}.
\end{equation}
The rational map $\{a^{ij},b^{ij},c^{ij}\}\mapsto\{a^{ij}_k,b^{ij}_k,c^{ij}_k\}$ is a three-field analog of the star-triangle map, and reduces to the latter in the case $a^{ij}=b^{ij}=c^{ij}$. 
\end{theorem}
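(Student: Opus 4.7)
The plan is to mimic the proof of Theorem~\ref{theorem 2-field star triangle}: write the four corner equations on the elementary cube $\sigma^{123}$ as linear equations in the four black fields $x,x_{12},x_{23},x_{31}$, and impose the rank-1 consistency requirement of Definition~\ref{def: corner eqs consist}, namely that all four equations be scalar multiples of one another.

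With $\Sigma:=b^{12}+b^{23}+b^{31}$, the equation $\partial S^{123}/\partial x=0$ has coefficients $(-\Sigma,\,a^{12},\,a^{23},\,a^{31})$ in the tuple $(x,x_{12},x_{23},x_{31})$. Matching the coefficient of $x$ in $\partial S^{123}/\partial x_{ij}=0$ forces the proportionality ratio $\mu_{ij}=-a^{ij}/\Sigma$; matching the off-diagonal coefficients at $x_{jk}$ and $x_{ki}$ then immediately produces the three $a$-formulas \eqref{3-field s-t map a}; and matching the diagonal coefficient at $x_{ij}$, after eliminating $c^{ij}$ via the input constraint $b^{ij}c^{ij}=(a^{ij})^2$, yields the three sum relations
$$
c^{jk}_i+b^{ki}_j \;=\; \frac{c^{ij}(b^{jk}+b^{ki})}{\Sigma}, \qquad (ijk)\ \text{cyclic}.
$$
Together with the output complete-square constraints $b^{ij}_k c^{ij}_k=(a^{ij}_k)^2$, this forms a six-equation system for the six unknowns $b^{ij}_k, c^{ij}_k$.

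For the direction $(\Leftarrow)$ I would substitute the ansatz \eqref{3-field s-t map b}--\eqref{3-field s-t map c} and verify the six identities directly; for instance
$$
c^{23}_1+b^{31}_2 \;=\; \frac{b^{31}c^{12}+c^{12}b^{23}}{\Sigma} \;=\; \frac{c^{12}(b^{23}+b^{31})}{\Sigma}
$$
and
$$
b^{23}_1c^{23}_1 \;=\; \frac{(c^{31}b^{12})(b^{31}c^{12})}{\Sigma^2} \;=\; \frac{(a^{12})^2(a^{31})^2}{\Sigma^2} \;=\; (a^{23}_1)^2,
$$
the remaining identities following by cyclic permutation of $(1,2,3)$. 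This establishes rank-1 proportionality of all four corner equations, hence consistency.

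The main obstacle is uniqueness in the converse $(\Rightarrow)$. The six-equation system has the feature that each unknown appears in exactly one sum relation and one product relation; parametrizing by $u:=b^{12}_3$ and eliminating successively reduces everything to a single quadratic equation $Au^2+\mathcal{B}u+C=0$ in $u$. I would show that on the input locus $b^{ij}c^{ij}=(a^{ij})^2$ this quadratic is a perfect square: a direct expansion gives $A=c^{12}c^{31}b^{23}/\Sigma$, $\mathcal{B}=-2b^{23}b^{31}c^{12}c^{23}c^{31}/\Sigma^2$, and $C=b^{23}(c^{23})^2(b^{31})^2c^{12}c^{31}/\Sigma^3$, so that $\mathcal{B}^2-4AC$ vanishes identically and the unique double root is $u=c^{23}b^{31}/\Sigma$, matching \eqref{3-field s-t map b}. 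The remaining five unknowns then follow by the chain of substitutions, reproducing \eqref{3-field s-t map b}--\eqref{3-field s-t map c}. This perfect-square verification is the only non-routine computational step of the argument.
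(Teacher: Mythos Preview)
Your approach is essentially that of the paper: impose rank~1 on the four corner equations to obtain the $a$-formulas \eqref{3-field s-t map a} and the three sum relations $c^{jk}_i+b^{ki}_j=c^{ij}(b^{jk}+b^{ki})/\Sigma$, then couple these with the output complete-square constraints to determine $b^{ij}_k,c^{ij}_k$. The paper simply asserts that the resulting six-by-six system has the stated solution, whereas you supply the missing uniqueness argument via the perfect-square discriminant computation; your coefficients $A,\mathcal{B},C$ check out and the double root is indeed $c^{23}b^{31}/\Sigma$, so your proof is correct and in fact more complete than the paper's.
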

\begin{proof} The rank 1 conditions are equivalent to equations \eqref{3-field s-t map a} combined with
\begin{equation}
c^{jk}_i+b^{ki}_j-c^{ij}=-\frac{a^{jk}_ia^{ki}_j}{a^{ij}_k}=-\frac{(a^{ij})^2}{b^{12}+b^{23}+b^{31}}.
\end{equation}
so that, taking equation \eqref{cond complete square} into account, we come to
\begin{equation}
c^{jk}_i+b^{ki}_j=\frac{c^{ij}(b^{jk}+b^{ki})}{b^{12}+b^{23}+b^{31}}.
\end{equation}
These three equations together with (the second half of) \eqref{cond complete square} gives us a system of six equations for six variables $c^{ij}_k$, $b^{ij}_k$. Its solution is given by \eqref{3-field s-t map b}, \eqref{3-field s-t map c}.
\end{proof}
A serious drawback of the three-field analog of the star-triangle map is that it is not invertible. More precisely, the image domain of this map is described by the conditions
\begin{equation}\label{cond 3-field t-s}
 b^{12}_3 b^{23}_1 b^{31}_2=c^{12}_3 c^{23}_1 c^{31}_2
\end{equation}
(obviously satisfied by virtue of  \eqref{3-field s-t map b}, \eqref{3-field s-t map c}). On the five-dimensional submanifold of $\bbC^6$ described by \eqref{cond 3-field t-s}, relations \eqref{3-field s-t map b}, \eqref{3-field s-t map c} can be reversed up to the multiplicative scaling $b^{ij}\to \lambda^2 b^{ij}$, while relations \eqref{3-field s-t map a} can be reversed up to the multiplicative scaling $a^{ij}\to \lambda a^{ij}$.
 
 Thus, a 3D corner of a multi-dimensional quad-surface centered at a white point can only be flipped (according to the triangle-star map), if the coefficients on the black diagonals of this 3D corner satisfy conditions of the type \eqref{cond 3-field t-s}. It is very difficult to control such conditions in the process of flipping. The way out consists in interpreting these conditions, as well as relations
\begin{equation}\label{closed 2}
b^{ij}_k b^{jk} c^{ki}= c^{ij}_k c^{jk} b^{ki},
\end{equation} 
which are immediate consequences of  \eqref{3-field s-t map b}, \eqref{3-field s-t map c},  as closedness of the multiplicative 1-form $\nu$ on (directed) black diagonals, defined by
 \begin{equation}
\nu(n,n+e_i+e_j)=\frac{b^{ij}}{c^{ij}}, \quad  \nu(n+e_j+e_k,n+e_k+e_i)=\frac{b^{ij}_k}{c^{ij}_k}.
 \end{equation}
We consider only topologically trivial situations, where we can conclude that the multiplicative 1-form $\nu$ is exact. Let $\rho^2:\bbZ^m_{\rm even}\to\bbC$ be the function on black points obtained by integration of the exact 1-form $\nu$, i.e. satisfying
\begin{equation}
\nu(n,n+e_i+e_j) =\frac{\rho^2(n+e_i+e_j)}{\rho^2(n)}=\frac{\rho_{ij}^2}{\rho^2}, 
\end{equation}
\begin{equation}
\nu(n+e_j+e_k,n+e_k+e_i)  =\frac{\rho^2(n+e_k+e_i)}{\rho^2(n+e_j+e_k)}=\frac{\rho_{ki}^2}{\rho_{jk}^2}.
 \end{equation}
 Then we have:
 \begin{equation}\label{bc a}
 b^{ij}=a^{ij}\cdot\frac{\rho_{ij}}{\rho}, \quad c^{ij}=a^{ij}\cdot\frac{\rho}{\rho_{ij}},
 \end{equation}
 \begin{equation}\label{bc a hat}
 b^{ij}_k=a^{ij}_k\cdot\frac{\rho_{ki}}{\rho_{jk}}, \quad c^{ij}_k=a^{ij}_k\cdot\frac{\rho_{jk}}{\rho_{ki}}.
 \end{equation}
Conversely, for any function $\rho:\bbZ^m_{\rm even}\to\bbC$ equations \eqref{bc a}, \eqref{bc a hat} ensure that \eqref{cond 3-field t-s}, \eqref{closed 2} are satisfied. Now observe that substitution of \eqref{bc a}, \eqref{bc a hat} into \eqref{3-field s-t map a}, \eqref{3-field s-t map b}, \eqref{3-field s-t map c}, results in the same equations
\begin{equation}
a^{ij}_k=\frac{a^{jk}a^{ki}\rho}{a^{12}\rho_{12}+a^{23}\rho_{23}+a^{31}\rho_{31}}.
\end{equation}
The latter equation is equivalent to
\begin{equation}
a^{ij}_k\rho_{jk}\rho_{ki}=\frac{a^{jk}a^{ki}\rho^2\rho_{jk}\rho_{ki}}{a^{12}\rho\rho_{12}+a^{23}\rho\rho_{23}+a^{31}\rho\rho_{31}},
\end{equation}
which shows that the quantities
\begin{equation}
A^{ij}=a^{ij} \rho \rho_{ij}, \quad A^{ij}_k=a^{ij}_k \rho_{jk} \rho_{ki}
\end{equation}
satisfy the standard star-triangle relations 
\begin{equation}\label{star-triangle proper}
A^{ij}_k=\frac{A^{jk}A^{ki}}{A^{12}+A^{23}+A^{31}}.
\end{equation}
We get the following expressions:
\begin{equation}\label{3-field s-t map a sol 1}
a^{ij}=\frac{A^{ij}}{\rho\rho_{ij}}, \quad  b^{ij}=\frac{A^{ij}}{\rho^2}, \quad  c^{ij}=\frac{A^{ij}}{\rho_{ij}^2},
\end{equation}
\begin{equation}\label{3-field s-t map a sol 2}
a^{ij}_k=\frac{A^{ij}_k}{\rho_{jk}\rho_{ki}}, \quad  b^{ij}_k=\frac{A^{ij}_k}{\rho_{jk}^2}, \quad  c^{ij}_k=\frac{A^{ij}_k}{\rho_{ki}^2},
\end{equation}
which can be interpreted as a simple classification result.
\begin{theorem}
Lagrangian 2-forms \eqref{eq: 2-form bw corner 1}, \eqref{eq: 2-form bw corner 2} with coefficients satisfying \eqref{cond complete square} constitute a pluri-Lagrangian system, if and only if they are gauge equivalent to the 2-forms 
$$
 \frac{1}{2}A^{ij}(y_{ij}-y)^2, \quad \frac{1}{2}A^{ij}(y_i-y_j)^2,
$$ 
where $A^{ij}$ is an arbitrary solution of the star-triangle map \eqref{star-triangle proper}, via a gauge transformation $y=x/\rho$ with an arbitrary function $\rho:\mathbb Z^m_{\rm even}\to\mathbb C$.
\end{theorem}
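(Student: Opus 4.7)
The plan is to treat this theorem as an assembly of the computations already carried out in the paragraphs preceding it, reorganized as a two-way equivalence. The forward direction follows a gauge-theoretic strategy: extract closedness of a suitable multiplicative 1-form from the pluri-Lagrangian equations, integrate it to produce the gauge function $\rho$, and then verify that the coefficients $a^{ij},b^{ij},c^{ij}$ collapse to $A^{ij}=a^{ij}\rho\rho_{ij}$ obeying the classical star-triangle relation, after which the change of variables $y=x/\rho$ rewrites the Lagrangian into the standard discrete-harmonic form. The converse is essentially the well-known pluri-Lagrangian property of the star-triangle system, transported along the gauge.

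For the forward direction, I would first observe that relations \eqref{cond 3-field t-s} and \eqref{closed 2}, which are immediate consequences of \eqref{3-field s-t map b}--\eqref{3-field s-t map c}, encode exactly the closedness condition $d\nu=0$ on oriented black diagonals of the multiplicative 1-form
\[
\nu(n,n+e_i+e_j)=\frac{b^{ij}}{c^{ij}}, \qquad \nu(n+e_j+e_k,n+e_k+e_i)=\frac{b^{ij}_k}{c^{ij}_k}.
\]
Restricting to a topologically trivial region of $\bbZ^m_{\rm even}$, closedness implies exactness, so there exists $\rho^2:\bbZ^m_{\rm even}\to\bbC$ (defined up to a multiplicative constant) such that the two displayed ratios equal $\rho_{ij}^2/\rho^2$ and $\rho_{ki}^2/\rho_{jk}^2$, respectively. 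Combined with the complete-square conditions \eqref{cond complete square} this forces \eqref{bc a} and \eqref{bc a hat}. Plugging these into the pluri-Lagrangian equations \eqref{3-field s-t map a}--\eqref{3-field s-t map c} yields, after multiplication by appropriate $\rho$-factors, the single relation \eqref{star-triangle proper} for the rescaled quantities $A^{ij}=a^{ij}\rho\rho_{ij}$ and $A^{ij}_k=a^{ij}_k\rho_{jk}\rho_{ki}$. Finally, a direct substitution shows that under the gauge $y=x/\rho$ the 2-forms \eqref{eq: 2-form bw corner 1} and \eqref{eq: 2-form bw corner 2} transform into $\tfrac{1}{2}A^{ij}(y_{ij}-y)^2$ and $\tfrac{1}{2}A^{ij}_k(y_i-y_j)^2$, respectively.

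For the converse direction, I would invoke the fact (recorded in the Remark preceding Theorem~\ref{theorem 2-field star triangle} and detailed in \cite{BS_pluriharmonic}) that the discrete 2-forms \eqref{eq: 2-form bw diag} form a pluri-Lagrangian system precisely when the coefficients $A^{ij}$ satisfy the classical star-triangle relation \eqref{star-triangle proper}. Since the gauge transformation $y=x/\rho$ is a multiplicative rescaling of the field at every black vertex and does not touch the white vertices (which do not appear in either 2-form), it maps critical points of the action on any quad-surface bijectively to critical points of the transformed action with the same boundary data. Hence the pluri-Lagrangian property is preserved in both directions of the gauge, and the converse follows.

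The main obstacle I expect is the unambiguous extraction of $\rho$: the existence of a single-valued square root $\rho=\sqrt{\rho^2}$ on the connected set of black vertices, and the bookkeeping ensuring that the relations \eqref{bc a}, \eqref{bc a hat} really hold with consistent sign choices, will require the topological-triviality assumption to be stated carefully (e.g.\ on a simply connected subset of $\bbZ^m_{\rm even}$). The rest of the verification is a direct, if somewhat tedious, substitution, and is best carried out by reducing every equation to the classical star-triangle relation \eqref{star-triangle proper} using the formulas \eqref{3-field s-t map a sol 1}--\eqref{3-field s-t map a sol 2}.
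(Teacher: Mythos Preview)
Your proposal is correct and follows essentially the same route as the paper: the theorem is stated as a summary of the computations immediately preceding it (closedness of the multiplicative 1-form $\nu$, integration to $\rho^2$, the rewriting \eqref{bc a}--\eqref{bc a hat}, and the reduction to \eqref{star-triangle proper}), and you have simply reorganized those steps into an explicit two-way argument. Your added converse direction and the caveat about the single-valuedness of $\rho$ are both appropriate; the paper handles the latter only by the blanket assumption of topological triviality.
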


\end{appendix}


\end{document}